
\documentclass[11pt,a4paper]{article}

\usepackage{amsmath}
\usepackage{amsfonts}
\usepackage{amsthm}
\usepackage[T1]{fontenc}
\usepackage[latin9]{inputenc}
\usepackage[pdftex]{color,graphicx}

\newtheorem{Theorem}{Theorem}
\newtheorem{Corollary}{Corollary}

\newtheorem{Lemma}{Lemma}
\newtheorem{Remark}{{Remark}}

\newcommand{\LL}{{\mathrm{L}}}

\newcommand{\ds}{{\mathrm{d}}s}
\newcommand{\dx}{{\mathrm{d}}x}
\newcommand{\dy}{{\mathrm{d}}y}
\newcommand{\Id}{{\mathbf{1}}}
\newcommand{\dom}{{\mathrm{dom}~}}

\newcommand{\var}{\varepsilon}

\makeatletter
\newcommand{\subjclass}[2][2010]{%
  \let\@oldtitle\@title%
  \gdef\@title{\@oldtitle\footnotetext{#1 \emph{Mathematics subject classification.} #2}}%
}
\newcommand{\keywords}[1]{%
  \let\@@oldtitle\@title%
  \gdef\@title{\@@oldtitle\footnotetext{\emph{Key words and phrases.} #1.}}%
}
\makeatother

%%%%%%%%%%%%%%%%%%%%%%%%%%%%%%%%%%%%%%%%%%%%%%%%%%%%%%%%%%%%%%%%%%%%%%%%%%%%%%%%%%%%%%%%%%%%%%%%%%%%%%

%%%%%%%%%%%%%%%%%%%%%%% Margens %%%%%%%%%%%%%%%%%%%%%%%%%%%%%%%%%%%%%%%%%%%%%%%%%%%%%%%%%%%%%%%%%%

\setlength{\paperheight}  {300mm}   % Altura do papel
\setlength{\paperwidth}   {260mm}   % Largura do papel
\setlength{\voffset}{-17mm} \setlength{\hoffset}{-8mm}
\setlength{\textheight}{240mm}  % Altura do texto (controle aqui a margem inferior)
\setlength{\textwidth}{150mm}   % Largura do texto (controle aqui a margem direita)
\setlength{\unitlength}{1cm}
\setlength{\baselineskip}{1.5\baselineskip}
\setlength{\parskip}{0.1\baselineskip}  % Espa?amento entre par?grafos

%%%%%%%%%%%%%%%%%%%%%%%%%%%%%%%%%%%%%%%%%%%%%%%%%%%%%%%%%%%%%%%%%%%%%%%%%%%%%%%%%%%%%%%%%%%%%%%%%%%%

\begin{document}

\title{Absolute continuity and band gaps 
of the spectrum of the Dirichlet Laplacian in periodic waveguides}
\author{Carlos R. Mamani
\quad and \quad Alessandra A. Verri}

\date{\today}

\maketitle 

\begin{abstract}
Consider the Dirichlet Laplacian operator $-\Delta^D$ 
in a periodic waveguide $\Omega$. 
Under the condition that 
$\Omega$ is sufficiently thin, 
we show that its  spectrum $\sigma(-\Delta^D)$ is absolutely continuous (in each finite region).
In addition,
we ensure the existence of at least one gap in $\sigma(-\Delta^D)$
and locate it.
\end{abstract}

%\

%\noindent {\bf MSC (2010):} primary 35J10. Secondary: 47F05, 81V99.

%\

%\noindent    {\bf Keywords:} Schr\"odinger operators; periodic waveguides; absolute
%	continuity; spectral gaps.

%\

%\noindent {Running head:} Periodic waveguides

%\

\section{Introduction and results}

During the last years the Dirichlet Laplacian operator $-\Delta^D$ restricted to strips (in $\mathbb R^2$) 
or tubes (in $\mathbb R^3$) has been studied under various aspects.
We highlight the particular case where the geometry of these regions are periodic
\cite{bento,borisov,friedcabs,friedlander,sobolev,Yoshi}. In this situation,
an interesting point
is to know under what conditions the spectrum  $\sigma(-\Delta^D)$ is purely absolutely continuous.
On the other hand,
since $\sigma(-\Delta^D)$ is a union of bands, another question
is about the existence of gaps in its structure.

In the case of planar periodically curved strips, 
the absolutely continuity was proved by Sobolev \cite{sobolev}
and the existence and location of band gaps was studied by Yoshitomi \cite{Yoshi}.
The goal of this  paper is to prove  similar results to those in
the three dimensional case. 
In the following paragraphs, we explain the details.

Let $r: \mathbb R \to \mathbb R^3$ be a simple $C^3$ curve in $\mathbb R^3$
parametrized by its arc-length parameter $s$ which possesses an appropriate Frenet frame;
see Section \ref{changecoordinates}.
Suppose that $r$ is periodic, i.e., there exists $L > 0$ and a nonzero vector $u$
so that
$r(s+L)=u+r(s)$, for all $s \in \mathbb R$. Denote by $k(s)$ and $\tau(s)$ the curvature and torsion of $r$ at the position $s$, respectively.
Pick $S \neq \emptyset$; an open, bounded, smooth and connected subset of~$\mathbb R^2$. 
Build a tube (waveguide)  in~$\mathbb R^3$ by properly moving the region~$S$ along~$r(s)$; at each 
point~$r(s)$ the cross-section region $S$ may present a (continuously differentiable) rotation angle $\alpha(s)$.
Suppose that $\alpha(s)$ is  $L$-periodic.
For $\var > 0$ small enough, one can realize this same construction with the region $\var S$ and so obtaining a thin waveguide which is denoted by  $\Omega_\var$.

Let $-\Delta_{\Omega_\varepsilon}^D$ be the Dirichlet Laplacian on $\Omega_\varepsilon$.
Conventionally, $-\Delta_{\Omega_\varepsilon}^D$ is the Friedrichs extension of the Laplacian operator
$-\Delta$ in $L^2(\Omega_\varepsilon)$ with domain $C_0^\infty(\Omega_\varepsilon)$.
Denote by $\lambda_0 > 0$  the first eigenvalue of the Dirichlet Laplacian $-\Delta_S^D$ in $S$.
Due to the geometrical characteristics of $S$, $\lambda_0$ is simple.
One of the main results of this work is

\begin{Theorem}\label{maintheoremintrod}
For each $E > 0$, there exists  $\varepsilon_E > 0$ so that the spectrum of
$-\Delta_{\Omega_\varepsilon}^D$ is absolutely continuous in the interval 
$[0, \lambda_0/\varepsilon^2 + E]$, for all $\varepsilon \in (0, \varepsilon_E)$.
\end{Theorem}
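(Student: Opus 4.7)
The plan is to combine a Floquet--Bloch decomposition with a thin-waveguide asymptotic analysis. First I would use the change of coordinates of Section~\ref{changecoordinates}: via the Frenet frame of $r$ and the rotation angle $\alpha$, the waveguide $\Omega_\varepsilon$ is diffeomorphic to the straight tube $\R\times S$ equipped with a Riemannian metric that is $L$-periodic in the longitudinal variable $s$. After the standard unitary transformation that removes the Jacobian from the measure, the Dirichlet Laplacian $-\Delta_{\Omega_\varepsilon}^D$ becomes unitarily equivalent to an elliptic operator $H_\varepsilon$ on $L^2(\R\times S,\, ds\,dy)$ whose coefficients are $L$-periodic in $s$.

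Next I would apply the Floquet--Bloch decomposition with respect to the $L$-translation,
\[
H_\varepsilon \;\cong\; \int_{[0,2\pi/L]}^{\oplus} H_\varepsilon(\theta)\, d\theta,
\]
where each fiber $H_\varepsilon(\theta)$ acts on the fundamental cell $Q=[0,L]\times S$ with quasi-periodic boundary conditions in $s$ (phase $e^{i\theta L}$) and Dirichlet conditions on $[0,L]\times \partial S$. Each $H_\varepsilon(\theta)$ has compact resolvent, so its spectrum is a sequence of eigenvalues $E_n^\varepsilon(\theta)$, $n\ge 1$. The theorem reduces to showing, for $\varepsilon$ small enough depending on $E$, that no band function $E_n^\varepsilon(\cdot)$ is constant on a set of positive measure in $[0,2\pi/L]$ whenever it meets the window $[0,\lambda_0/\varepsilon^2+E]$.

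The decisive step is a perturbative analysis in $\varepsilon$. Separating the first transverse eigenfunction $u_0$ of $-\Delta_S^D$ (corresponding to $\lambda_0$) from the higher transverse modes via a Feshbach/Grushin-type reduction, one expects $H_\varepsilon(\theta)-\lambda_0/\varepsilon^2$ to converge in the norm-resolvent sense, on bounded intervals, to a one-dimensional effective operator
\[
T(\theta)=-(\partial_s+i\theta)^2+V_{\mathrm{eff}}(s)
\]
on $L^2([0,L])$ with periodic boundary conditions, where $V_{\mathrm{eff}}$ is an $L$-periodic potential built from $k$, $\tau$, $\alpha$ and the transverse profile~$u_0$. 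For this one-dimensional periodic Schr\"odinger operator the Bloch eigenvalues are real-analytic and non-constant in $\theta$ (its spectrum is purely absolutely continuous). Kato's analytic perturbation theory, together with the fact that $\theta\mapsto H_\varepsilon(\theta)$ is a type~(A) holomorphic family, transfers real-analyticity to the bands $E_n^\varepsilon$. A real-analytic, non-constant function on $[0,2\pi/L]$ has level sets of measure zero, so Thomas's criterion delivers absolute continuity of the direct integral on the target window.

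The main obstacle will be quantitative uniformity: \emph{a priori} the number of bands entering $[0,\lambda_0/\varepsilon^2+E]$ might grow as $\varepsilon\to 0$, and one must rule out a flat band appearing inside the window. A Weyl-type counting argument on $Q$, combined with the spectral gap between $\lambda_0$ and the next Dirichlet eigenvalue of $-\Delta_S^D$ (so that only bands emerging from the first transverse mode lie below $\lambda_0/\varepsilon^2+E$ when $\varepsilon$ is small), should bound the number of relevant bands by a constant depending only on $E$. Then norm-resolvent convergence of the fibers to $T(\theta)$ transfers non-constancy from the effective one-dimensional operator to $H_\varepsilon(\theta)$ uniformly in~$\theta$ and in the finitely many relevant band indices. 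Controlling the error terms in the Grushin reduction (which involve derivatives of $k$, $\tau$, $\alpha$) so that they do not produce an accidentally constant band at higher order in $\varepsilon$ is where the $C^3$-regularity of $r$ and the smoothness and connectedness of $S$ enter.
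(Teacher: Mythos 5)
Your proposal follows essentially the same route as the paper: straightening the tube, Floquet--Bloch decomposition into fibers $T_\varepsilon^\theta$ with discrete spectrum, real-analyticity of the band functions via a type~(A) holomorphic family, and uniform norm-resolvent convergence of $T_\varepsilon^\theta-\lambda_0/\varepsilon^2$ to the effective one-dimensional periodic operator $T^\theta$, whose non-constant Bloch eigenvalues transfer non-constancy to the finitely many bands meeting $[0,\lambda_0/\varepsilon^2+E]$. The points you flag as obstacles (uniformity in $\theta$, boundedness of the number of relevant bands for fixed $E$, ruling out flat bands) are exactly the ones the paper resolves through Theorem~\ref{reductionofdimension} and Corollary~\ref{asymptoticbehaviour}, so the plan is sound.
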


In \cite{bento}, the authors proved  this result 
considering the particular case where the cross section of $\Omega_\varepsilon$
is a ball ${\cal B}_\varepsilon =\{y \in \mathbb R^2: |y| < \varepsilon\}$ 
(this fact eliminates the twist effect).
Covering the case where $\Omega_\varepsilon$ can be simultaneously curved and twisted is our main
contribution on the theme.

Ahead, we summarize the main steps to prove Theorem \ref{maintheoremintrod}. 
In particular, we call attention to Theorem
\ref{reductionofdimension}
and Corollary \ref{asymptoticbehaviour}, which are
our main tools to generalize the result of \cite{bento}.
Then, we present the results
related to the existence and location of gaps in $\sigma(-\Delta_{\Omega_\varepsilon}^D)$.
Many details are omitted in this introduction but will be presented
in the next sections.

Fix a number $c > \|k^2/4\|_\infty$. Denote by $\Id$ the identity operator.
For technical reasons, we  start to study the operator
$-\Delta_{\Omega_\varepsilon}^D + c \, \Id$; see Section \ref{reductiondimsec}. 

A change of coordinates shows that $-\Delta_{\Omega_\varepsilon}^D + c \, \Id$ is unitarily equivalent to 
the operator
\begin{equation}\label{introdfirstchange}
T_\varepsilon \psi := -\frac{1}{\beta_\varepsilon}
(\partial_{sy}^R \beta_\varepsilon^{-1} \partial_{sy}^R) \psi 
- \frac{1}{\varepsilon^2 \beta_\varepsilon} {\rm div}( \beta_\varepsilon \nabla_y \psi) + c \, \psi,
\end{equation}
where 
\begin{equation}\label{partialxynotation}
\partial_{sy}^R \psi := \psi' + \langle \nabla _y \psi, R \,y \rangle (\tau+\alpha')(s),
\end{equation}
${\rm div}$ denotes the divergent of a vetor field in $S$,
$\psi':= \partial \psi / \partial s$,
$\nabla_y \psi := (\partial \psi/\partial y_1, \partial \psi/\partial y_2)$ and
$R$ is the rotation matrix $\left(\begin{array}{cc}
0 & -1 \\
1  & 0
\end{array}\right)$.
The domain
$\dom T_\varepsilon$ is a subspace of the Hilbert space $L^2(\mathbb R \times S, \beta_\varepsilon \ds \dy)$
where the measure $\beta_\varepsilon \ds\dy$ comes from the Riemannian metric (\ref{metric});
see Section \ref{changecoordinates} for the exact definition of $\beta_\varepsilon$ and details of this transformation.

Since the coefficients of $T_\varepsilon$ are periodic with respect to $s$, we utilize the Floquet-Bloch reduction under the Brillouin zone ${\cal C}:=[-\pi/L, \pi/L]$. More precisely, we show that
$T_\varepsilon$ is unitarily equivalent to the operator
$\int_{{\cal C}}^\oplus T_\varepsilon^\theta \, d \theta,$
where
\begin{equation}\label{fiberoperatort}
T_\varepsilon^\theta \psi := \frac{1}{\beta_\varepsilon} 
(-i\partial_{sy}^R + \theta) \beta_\varepsilon^{-1} (-i\partial_{sy}^R + \theta) \psi
- \frac{1}{\varepsilon^2 \beta_\varepsilon} {\rm div}(\beta_\varepsilon \nabla_y \psi) + c \, \psi.
\end{equation}
Now, the domain of $T_\varepsilon^\theta$ is a subspace of $L^2((0,L)\times S, \beta_\varepsilon \ds \dy)$
and, in particular, the functions in $\dom T_\varepsilon^\theta$ satisfy the boundary conditions
$\psi(0,y)=\psi(L,y)$ and $\psi'(0,y)=\psi'(L,y)$ in $L^2(S)$.
Furthermore,
each  $T_\varepsilon^\theta$ is self-adjoint.
See Lemma \ref{floquetdecomposition} in Section \ref{floquetsection} for this decomposition.

Each $T_\varepsilon^\theta$ has compact resolvent and is bounded from below. Thus, 
$\sigma(T_\varepsilon^\theta)$ is discrete.
Denote by $\{E_n(\varepsilon, \theta)\}_{n \in \mathbb N}$ the family of all  eigenvalues of $T_\varepsilon^\theta$ 
and by
$\{\psi_n(\varepsilon, \theta)\}_{n \in \mathbb N}$ the family of the corresponding normalized  eigenfunctions, i.e.,
\[T_\varepsilon^\theta \psi_n(\varepsilon, \theta) = E_n(\varepsilon, \theta) \psi_n(\varepsilon, \theta),
\quad n =1, 2, 3, \cdots, \quad \theta \in {\cal C}.\]

We have
\begin{equation}\label{sigmaband}
\sigma(-\Delta_{\Omega_\varepsilon}^D) = \cup_{n=1}^{\infty} \left\{E_n(\varepsilon, {\cal C})\right\},
\quad \hbox{where} \quad E_n(\varepsilon, {\cal C}) :=  
\cup_{\theta \in {\cal C}} \left\{E_n(\varepsilon, \theta)\right\};
\end{equation}
each  $E_n(\varepsilon, {\cal C})$ is called  $n$th band of 
$\sigma(-\Delta_{\Omega_\varepsilon}^D)$.

We begin with the following result.

\begin{Lemma}\label{analyticfamily}
$\{T_\varepsilon^\theta:\theta \in {\cal C}\}$ is a type A analytic family.
\end{Lemma}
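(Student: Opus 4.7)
Recall that a family $\{T(\theta)\}_{\theta\in\mathcal{C}}$ of closed operators on a Hilbert space is \emph{of type A} (in the sense of Kato) provided that (i) the operator domain $\dom T(\theta)$ does not depend on $\theta$, and (ii) for every vector $\psi$ in that common domain, the map $\theta\mapsto T(\theta)\psi$ is analytic. The plan is to check both conditions directly from the explicit expression \eqref{fiberoperatort}.

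First, I would verify (i). The Friedrichs construction of $T_\varepsilon^\theta$ starts from a quadratic form whose form domain consists of functions in $H^1((0,L)\times S,\beta_\varepsilon\,\ds\,\dy)$ which vanish on $\mathbb R\times\partial S$ and satisfy the $s$-periodicity condition $\psi(0,y)=\psi(L,y)$; this form domain is manifestly $\theta$-independent because the $\theta$-terms contribute only bounded lower-order pieces to the form. A standard elliptic regularity argument (using that $\beta_\varepsilon$ is smooth, positive and uniformly bounded on $(0,L)\times S$, as follows from its definition via the Riemannian metric \eqref{metric}) then identifies $\dom T_\varepsilon^\theta$ as the subspace of $H^2$-like functions satisfying the Dirichlet condition on $\mathbb R\times\partial S$ together with the two periodic conditions $\psi(0,y)=\psi(L,y)$ and $\psi'(0,y)=\psi'(L,y)$. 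Crucially, neither the regularity requirement nor the boundary conditions involve $\theta$, so the domain is indeed the same for every $\theta\in\mathcal{C}$.

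Next, to establish (ii), I would simply expand the symmetric expression in \eqref{fiberoperatort} as a polynomial in $\theta$. Writing $P:=-i\partial_{sy}^R$, one has
\begin{equation*}
T_\varepsilon^\theta\psi \;=\; T_\varepsilon^{0}\psi \;+\; \theta\,\beta_\varepsilon^{-1}\bigl(P(\beta_\varepsilon^{-1}\psi)+\beta_\varepsilon^{-1}P\psi\bigr) \;+\; \theta^{2}\,\beta_\varepsilon^{-2}\psi .
\end{equation*}
Since $\beta_\varepsilon$ is smooth and bounded together with its derivatives, the coefficient operators $\beta_\varepsilon^{-1}(P\beta_\varepsilon^{-1}+\beta_\varepsilon^{-1}P)$ and $\beta_\varepsilon^{-2}$ are relatively bounded with respect to $T_\varepsilon^{0}$ (they are of first and zeroth order respectively, whereas $T_\varepsilon^{0}$ is of second order). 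Hence for each fixed $\psi\in\dom T_\varepsilon^{0}$ the right-hand side is a polynomial in $\theta$ with coefficients in $L^2((0,L)\times S,\beta_\varepsilon\,\ds\,\dy)$, which extends to an entire analytic function $\mathbb C\ni\theta\mapsto T_\varepsilon^\theta\psi$.

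The only real obstacle is the first step: a careful identification of $\dom T_\varepsilon^\theta$ so as to confirm its independence of $\theta$. Once this is done, analyticity is automatic since the $\theta$-dependence is polynomial. Combining (i) and (ii) yields that $\{T_\varepsilon^\theta\}_{\theta\in\mathcal{C}}$ is a self-adjoint holomorphic family of type A.
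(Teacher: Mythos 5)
Your argument is essentially the paper's: both expand $T_\varepsilon^\theta$ as a quadratic polynomial in $\theta$ about $T_\varepsilon^0$ (your coefficient operators coincide with the paper's perturbation $V_\varepsilon^\theta$ once you apply the Leibniz rule to $P(\beta_\varepsilon^{-1}\psi)$) and reduce the claim to relative boundedness of the $\theta$-dependent part with respect to $T_\varepsilon^0$. The one place where the paper does real work --- an explicit resolvent estimate showing the relative bound is not merely finite but zero, so that closedness on the common domain persists for every $\theta$, which is what type A requires --- is exactly the step you assert on the general grounds that a first-order term is dominated by a second-order operator; that assertion is correct but is the entire content of the paper's proof, whereas your elliptic-regularity identification of the domain is not needed, since the Floquet decomposition lemma already exhibits $\dom T_\varepsilon^\theta$ as manifestly $\theta$-independent.
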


This lemma ensures that the functions $E_n(\varepsilon, \theta)$ 
are real analytic in ${\cal C}$ (its proof 
is presented in  Section \ref{floquetsection}).

Another important point to prove Theorem  \ref{maintheoremintrod} is to know an
asymptotic behavior of the eigenvalues $E_n(\varepsilon, \theta)$ as $\varepsilon$ tends to $0$.
For this characterization, 
for each $\theta \in {\cal C}$, consider the one dimensional self-adjoint operator
\[T^\theta w := (-i \partial_s + \theta)^2 w + \left[
C(S) (\tau+\alpha')^2(s) + c - \frac{k^2(s)}{4} \right] w,\]
acting in $L^2(0,L)$, where the functions in $\dom T^\theta$ satisfy the conditions
$w(0)=w(L)$ and $w'(0)=w'(L)$.
The constant $C(S)$ depends on the cross section $S$ and is defined by 
(\ref{defC(S)}) in Section \ref{reductiondimsec}.

For simplicity, write $Q:=(0,L) \times S$.
Recall  $\lambda_0 > 0$ denotes  the first eigenvalue of the Dirichlet Laplacian $-\Delta_S^D$ in $S$.
Denote by
$u_0$ the corresponding normalized eigenfunction.
Consider the closed subspace ${\cal L} :=\{w(s) u_0(y): w \in L^2(0,L) \} \subset L^2(Q)$
and the unitary operator ${\cal V}_\varepsilon$ defined by 
(\ref{unitaryappendixtheo}) in Section \ref{reductiondimsec}.
Our main tool to find an asymptotic behavior for $E_n(\varepsilon, \theta)$, and then to conclude
Theorem \ref{maintheoremintrod}, is given by
\begin{Theorem}\label{reductionofdimension}
There exists a number $K > 0$ so that, for all $\varepsilon > 0$ small enough,
\begin{equation}\label{compresulintro}
\sup_{\theta \in {\cal C}} \left\{\left\|{\cal V}_\varepsilon^{-1}\left(T_\varepsilon^\theta - \frac{\lambda_0}{\varepsilon^2} \Id\right)^{-1} {\cal V}_\varepsilon 
- ((T^\theta)^{-1} \oplus {\bf 0} ) \right\|\right\} \leq K \, \varepsilon,
\end{equation}
where ${\bf 0}$ is the null operator on the subspace ${\cal L}^\perp$.
\end{Theorem}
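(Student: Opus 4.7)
The plan is to prove (\ref{compresulintro}) by a quadratic-form approach that exploits the spectral gap $\lambda_1-\lambda_0>0$ of $-\Delta_S^D$ together with the $\varepsilon^{-2}$ factor in front of the transverse Laplacian of $T_\varepsilon^\theta$. As a first step I would establish a uniform pointwise expansion $\beta_\varepsilon(s,y)=1+\varepsilon g_1(s,y)+\varepsilon^2 g_2(s,y)+O(\varepsilon^3)$, with $g_1$ affine in $y$ and involving $k(s)$, together with uniform bounds on $\beta_\varepsilon^{\pm 1}$ and their derivatives; this absorbs the weight carried by $\mathcal{V}_\varepsilon$ and reduces computations in $L^2(Q,\beta_\varepsilon \ds \dy)$ to ones in $L^2(Q,\ds \dy)$ modulo errors of order $\varepsilon$. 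I would then decompose each $\psi\in\dom T_\varepsilon^\theta$ uniquely as $\psi(s,y)=w(s)u_0(y)+\phi(s,y)$ with $\int_S \phi(s,\cdot)\,u_0\, \dy=0$ for a.e.\ $s$, so that $wu_0\in\mathcal{L}$ and $\phi\in\mathcal{L}^\perp$.

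Denote by $t_\varepsilon^\theta$ the quadratic form of $T_\varepsilon^\theta-\lambda_0/\varepsilon^2\Id$. Using the min-max inequality $\int_S |\nabla_y\phi|^2 \dy\geq\lambda_1\int_S |\phi|^2 \dy$ for $\phi\perp u_0$ in the $y$-variable, Cauchy-Schwarz to absorb cross terms, and the uniform bound $|\theta|\leq\pi/L$, I would derive the coercivity
$$t_\varepsilon^\theta[\psi]+M\|\psi\|^2\geq \frac{\lambda_1-\lambda_0}{2\varepsilon^2}\|\phi\|^2+\tfrac{1}{2}\|(-i\partial_s+\theta)w\|_{L^2(0,L)}^2-M'\|w\|_{L^2}^2,$$
with $M,M'$ independent of $\varepsilon$ and $\theta\in\mathcal{C}$. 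For the effective form on $\mathcal{L}$ I would compute $t_\varepsilon^\theta[wu_0]$ term by term: on the $\partial_{sy}^R$ part, the cross term between $\partial_s w$ and $\langle\nabla_y u_0,Ry\rangle$ vanishes after $y$-integration because $\int_S u_0\langle\nabla_y u_0,Ry\rangle \dy=\tfrac{1}{2}\int_S\langle\nabla_y u_0^2,Ry\rangle \dy=0$ (using $\mathrm{div}(Ry)=0$ and $u_0|_{\partial S}=0$), leaving exactly $|(-i\partial_s+\theta)w|^2+C(S)(\tau+\alpha')^2|w|^2$ in each transverse fibre; on the transverse Laplacian side the potentially divergent $\varepsilon^{-1}$ contribution from $g_1$ cancels because $\int_S g_1\bigl(|\nabla_y u_0|^2-\lambda_0 u_0^2\bigr) \dy=0$ (integration by parts using $-\Delta u_0=\lambda_0 u_0$, $u_0|_{\partial S}=0$ and $g_1$ affine in $y$), while the $\varepsilon^0$ piece coming from $g_2$ produces precisely $-k^2(s)/4$ after averaging against $u_0^2$. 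Altogether, $t_\varepsilon^\theta[wu_0]=\langle T^\theta w,w\rangle_{L^2(0,L)}+\mathcal{R}_\varepsilon^\theta(w)$ with $|\mathcal{R}_\varepsilon^\theta(w)|\leq K\varepsilon\|w\|_{H^1_{\mathrm{per}}(0,L)}^2$.

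To conclude, fix $f\in L^2(Q)$, write $f=(\Pi f)\,u_0+f^\perp$ with $\Pi f(s):=\int_S f(s,y)u_0(y)\dy$, and set $\psi_\varepsilon:=(T_\varepsilon^\theta-\lambda_0/\varepsilon^2)^{-1}\mathcal{V}_\varepsilon f$, $w_0:=(T^\theta)^{-1}\Pi f$. Decomposing $\psi_\varepsilon=w_\varepsilon u_0+\phi_\varepsilon$, the coercivity applied to the resolvent equation immediately gives $\|\phi_\varepsilon\|_{L^2(Q)}\leq K\varepsilon\|f\|$; testing the same equation against $v(s)u_0(y)$ with $v\in H^1_{\mathrm{per}}(0,L)$ and invoking the effective-form computation of the previous paragraph yields $\langle T^\theta w_\varepsilon,v\rangle=\langle\Pi f,v\rangle+O(\varepsilon\|f\|\|v\|_{H^1})$, hence $\|w_\varepsilon-w_0\|_{L^2(0,L)}\leq K\varepsilon\|f\|$, where the hypothesis $c>\|k^2/4\|_\infty$ ensures that $T^\theta$ is uniformly positive and $(T^\theta)^{-1}$ uniformly bounded in $\theta$. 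Combined with the weight removal of the first step, this yields (\ref{compresulintro}) with $K$ independent of $\theta\in\mathcal{C}$. The main obstacle will be the bookkeeping that produces the effective form: carefully tracking the $O(\varepsilon)$ and $O(\varepsilon^2)$ corrections from $\beta_\varepsilon$ and from $\partial_{sy}^R$ acting on $wu_0$, verifying the cancellation of the potentially divergent $\varepsilon^{-1}$ term, and extracting the precise coefficient $-1/4$ in front of $k^2(s)$ — all of which rely on the Frenet-adapted transverse coordinates introduced in Section \ref{changecoordinates}.
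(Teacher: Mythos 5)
Your proposal is correct and follows essentially the same route as the paper: the orthogonal decomposition $L^2(Q)={\cal L}\oplus{\cal L}^\perp$, the spectral-gap coercivity $(\lambda_1-\lambda_0)/\varepsilon^2$ on ${\cal L}^\perp$, the removal of the weight $\beta_\varepsilon$ via ${\cal V}_\varepsilon$ up to an $O(\varepsilon)$ resolvent error, the $O(\varepsilon)$ smallness of the cross form between ${\cal L}$ and ${\cal L}^\perp$, and the identification of the effective one-dimensional form $t^\theta$ (including the vanishing of the $\int_S u_0\langle\nabla_y u_0,Ry\rangle\,\dy$ cross term and the emergence of $C(S)(\tau+\alpha')^2$ and $-k^2/4$). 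The only difference is one of packaging: where you carry out the passage from form estimates to the uniform resolvent estimate by hand (testing the resolvent equation against $v\,u_0$ and using uniform positivity of $T^\theta$), the paper delegates exactly this step to Proposition~3.1 of \cite{solomyak} and Theorem~3 of \cite{oliveira01}.
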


The spectrum of $T^\theta$ is purely discrete; 
denote by $\kappa_n(\theta)$ its $n$th eigenvalue counted with multiplicity.
Let ${\cal K}$ be
a compact subset of ${\cal C}$ which contains an open interval and does not contain the points
$\pm \pi/L$ and $0$.
Given $E>0$, without lost of generality, we can suppose that,
for all $\theta \in {\cal K}$,
the spectrum of $T_\var^\theta$ below $E+\lambda_0/\var^2$ consists of exactly $n_0$ eigenvalues
$\{E_n(\var, \theta)\}_{n=1}^{n_0}$.
As a consequence of Theorem \ref{reductionofdimension},

\begin{Corollary}\label{asymptoticbehaviournew}
There exists $\varepsilon_{n_0} > 0$ so that, 
for all 
$\varepsilon \in (0, \varepsilon_{n_0})$,
\begin{equation}\label{asymptoticintroduction}
E_n(\varepsilon, \theta) = \frac{\lambda_0}{\varepsilon^2} + \kappa_n(\theta) + O(\varepsilon),
\end{equation}
holds for each $n=1,2, \cdots, n_0$, uniformly in ${\cal K}$.
\end{Corollary}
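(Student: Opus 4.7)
The strategy is to convert the norm-resolvent estimate (\ref{compresulintro}) into eigenvalue asymptotics by means of Weyl's min--max inequality for ordered eigenvalues of bounded self-adjoint operators. Since $c>\|k^2/4\|_\infty$, the potential of $T^\theta$ satisfies $C(S)(\tau+\alpha')^2+c-k^2/4\geq c-\|k^2/4\|_\infty>0$, so $T^\theta\geq c-\|k^2/4\|_\infty>0$ uniformly in $\theta\in\mathcal{C}$. Hence $(T^\theta)^{-1}$ is a compact, positive, self-adjoint operator on $L^2(0,L)$ whose eigenvalues, ordered decreasingly and counted with multiplicity, are $\{1/\kappa_n(\theta)\}_{n\in\mathbb{N}}$. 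Consequently $B(\theta):=(T^\theta)^{-1}\oplus\mathbf{0}$, acting on $L^2(Q)=\mathcal{L}\oplus\mathcal{L}^\perp$, is bounded and self-adjoint with positive spectrum $\{1/\kappa_n(\theta)\}$ together with $0$ of infinite multiplicity. Setting $A_\varepsilon(\theta):=\mathcal{V}_\varepsilon^{-1}(T_\varepsilon^\theta-\tfrac{\lambda_0}{\varepsilon^2}\Id)^{-1}\mathcal{V}_\varepsilon$, Theorem \ref{reductionofdimension} reads $\|A_\varepsilon(\theta)-B(\theta)\|\leq K\varepsilon$ uniformly in $\theta\in\mathcal{C}$.

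Denote by $\mu_n^+(\cdot)$ the $n$th eigenvalue from the top, counted with multiplicity. Weyl's inequality gives
\[
\bigl|\mu_n^+(A_\varepsilon(\theta))-1/\kappa_n(\theta)\bigr|\leq K\varepsilon,
\]
for every $n\in\mathbb{N}$ and every $\theta\in\mathcal{C}$. The key point is to identify, for $\varepsilon$ small and $\theta\in\mathcal{K}$, the top $n_0$ eigenvalues of $A_\varepsilon(\theta)$ with $\{(E_n(\varepsilon,\theta)-\lambda_0/\varepsilon^2)^{-1}\}_{n=1}^{n_0}$ and to ensure they are positive. By the spectral mapping $\mu\mapsto 1/(\mu-\lambda_0/\varepsilon^2)$, any eigenvalue of $T_\varepsilon^\theta$ strictly below $\lambda_0/\varepsilon^2$ contributes a negative eigenvalue to $A_\varepsilon(\theta)$, which by the norm estimate above must lie in $[-K\varepsilon,0)$, forcing the corresponding eigenvalue of $T_\varepsilon^\theta$ to be below $\lambda_0/\varepsilon^2-(K\varepsilon)^{-1}$. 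Since $T_\varepsilon^\theta\geq c>0$, this is impossible for $\varepsilon$ sufficiently small; therefore all $n_0$ eigenvalues of $T_\varepsilon^\theta$ in $(-\infty,\lambda_0/\varepsilon^2+E)$ in fact lie in $(\lambda_0/\varepsilon^2,\lambda_0/\varepsilon^2+E)$, and their resolvent images coincide with the top $n_0$ (positive) eigenvalues of $A_\varepsilon(\theta)$.

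Combining the matching with Weyl's inequality yields
\[
\left|\frac{1}{E_n(\varepsilon,\theta)-\lambda_0/\varepsilon^2}-\frac{1}{\kappa_n(\theta)}\right|\leq K\varepsilon
\]
for $n=1,\ldots,n_0$ and $\theta\in\mathcal{K}$. Continuity of $\kappa_n(\theta)$ in $\theta$ (which follows from the analytic-family argument of Lemma \ref{analyticfamily} applied to $T^\theta$) together with compactness of $\mathcal{K}$ gives a uniform bound $\kappa_n(\theta)\leq M$ on $\mathcal{K}$ for $n\leq n_0$, so that inverting the above estimate leads to $E_n(\varepsilon,\theta)-\lambda_0/\varepsilon^2=\kappa_n(\theta)+O(\varepsilon)$ uniformly on $\mathcal{K}$, which is precisely (\ref{asymptoticintroduction}). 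I expect the main obstacle to be the identification step: one must rule out the possibility of ``phantom'' eigenvalues of $T_\varepsilon^\theta$ far below $\lambda_0/\varepsilon^2$ which could reshuffle the correspondence between the orderings of $\sigma(A_\varepsilon(\theta))$ and $\sigma(B(\theta))$; once the semiboundedness $T_\varepsilon^\theta\geq c$ is combined with the $-K\varepsilon$-lower-bound on $\sigma(A_\varepsilon(\theta))$, this obstruction dissolves and the remainder of the argument is algebraic.
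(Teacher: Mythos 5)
Your overall route is the same as the paper's: combine the norm--resolvent estimate of Theorem \ref{reductionofdimension} with the eigenvalue perturbation inequality for compact self-adjoint operators (the paper invokes Corollary 2.3 of Gohberg--Kre\u{\i}n, which is exactly the Weyl-type bound $|\mu_n^+(A)-\mu_n^+(B)|\leq\|A-B\|$ you use), then use boundedness of $\kappa_n(\theta)$ on the compact parameter set to invert the estimate on reciprocals. That part is fine and matches the paper.

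However, the step you yourself flag as the main obstacle --- ruling out eigenvalues of $T_\varepsilon^\theta$ below $\lambda_0/\varepsilon^2$ --- is resolved incorrectly. You argue that such an eigenvalue $\mu$ would have to satisfy $\mu\leq\lambda_0/\varepsilon^2-(K\varepsilon)^{-1}$ and that this contradicts $T_\varepsilon^\theta\geq c$. It does not: $\lambda_0/\varepsilon^2-(K\varepsilon)^{-1}\to+\infty$ as $\varepsilon\to 0$, so the interval $[c,\lambda_0/\varepsilon^2-(K\varepsilon)^{-1}]$ is nonempty (indeed enormous) for small $\varepsilon$, and the crude lower bound $T_\varepsilon^\theta\geq c$ excludes nothing. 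The correct justification is the strict operator inequality $T_\varepsilon^\theta-(\lambda_0/\varepsilon^2)\Id\geq\delta>0$ for all small $\varepsilon$, uniformly in $\theta$, which the paper establishes in Section \ref{reductiondimsec} from the transverse-mode inequality
\[
\int_S \frac{\beta_\varepsilon}{\varepsilon^2}\left(|\nabla_y\varphi|^2-\lambda_0|\varphi|^2\right)\mathrm{d}y\geq\gamma_\varepsilon(s)\int_S|\varphi|^2\,\mathrm{d}y,\qquad \gamma_\varepsilon(s)\to-\frac{k^2(s)}{4}\ \text{uniformly},
\]
(quoted from \cite{bmt}) together with the standing hypothesis $c>\|k^2/4\|_\infty$. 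This is precisely the statement that $0\in\rho\left(T_\varepsilon^\theta-(\lambda_0/\varepsilon^2)\Id\right)$ with the inverse a \emph{positive} compact operator, which is what makes the identification of the decreasingly ordered eigenvalues of $A_\varepsilon(\theta)$ with $\{(E_n(\varepsilon,\theta)-\lambda_0/\varepsilon^2)^{-1}\}_n$ legitimate. With that substitution your argument closes; without it, the identification step is unjustified.
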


In \cite{bento} the authors found a similar approximation as in Theorem \ref{reductionofdimension} that also holds uniformly 
for $\theta$ in ${\cal K}$. 
However their results were proved with the assumption that the cross section was a ball ${\cal B}_\varepsilon$. In their 
proofs, they have used results of \cite{duclos} which do not seem to generalize easily to other cross sections. 
On the other hand, similar estimates to (\ref{compresulintro}) and
(\ref{asymptoticintroduction})
were proved in \cite{bmt, oliveira04, davidsedi} for a larger class of cross sections
than only balls,
but the results hold only in the case $\theta=0$. 
We stressed that in \cite{davidsedi} the convergence is established without assuming
the existence of a Frenet frame in the reference curve $r$.

With all these tools in hands, we have 

\vspace{0.3cm}
\noindent
{\bf Proof of Theorem \ref{maintheoremintrod}:}
Let $E >0$,  without loss of generality, we suppose that, for all $\theta \in {\cal K}$, the spectrum
of $T_\varepsilon^\theta$ below $E+\lambda_0/\var^2$ consists of exactly $n_0$ eigenvalues 
$\{E_n(\varepsilon, \theta)\}_{n=1}^{n_0}$. 
Lemma \ref{analyticfamily} ensures that $E_n(\varepsilon, \theta)$ are real analytic functions.
To conclude the  theorem, it remains to show that each
$E_n(\varepsilon, \theta)$ is nonconstant.

Consider the functions $\kappa_n(\theta)$, $\theta \in {\cal K}$.
By Theorem XIII.89 in \cite{reed}, they are  nonconstant.
By Corollary \ref{asymptoticbehaviour}, there exists
$\varepsilon_E > 0$  so that
(\ref{asymptoticintroduction}) holds true for $n=1,2, \cdots, n_0$, 
uniformly in $\theta \in {\cal K}$,
for all $\varepsilon \in (0, \varepsilon_E)$.
Note that $\varepsilon_E > 0$ depends on $n_0$, i.e.,
the thickness of the tube depends on the length
of the energies to be covered.
By Section XIII.16 in \cite{reed}, the conclusion follows.

\vspace{0.3cm}

We know that the spectrum of $-\Delta_{\Omega_\varepsilon}^D$  
coincides with the union of bands; see (\ref{sigmaband}).
It is natural to question  the existence of gaps in its structure.
This subject was studied in \cite{Yoshi}. In that work, by
considering  a curved waveguide in $\mathbb R^2$,
the author ensured the existence of at least one gap in the spectrum
of the Dirichlet Laplacian
and found its location.
In this work, we prove similar results for the operator $-\Delta_{\Omega_\varepsilon}^D$.

At first,
it is possible to organize the eigenvalues $\{E_n(\var, \theta)\}_{n \in \mathbb N}$ of  $T_\varepsilon^\theta$ 
in order to obtain a non-decreasing sequence. We keep the same notation and write
\[E_1(\var, \theta) \leq E_2(\var, \theta) \leq \cdots \leq E_n(\var, \theta) \cdots, \quad
\theta \in {\cal C}.\]
In this step 
the functions $E_n(\var, \theta)$ are 
continuous and piece-wise analytic in ${\cal C}$ (see Chapter $7$ in \cite{kato});
each $E_n(\var, {\cal C})$ is either a closed interval or a one point set.
In this case, similar to Corollary \ref{asymptoticbehaviournew}, we have

\begin{Corollary}\label{asymptoticbehaviour}
For each  $n_0 \in \mathbb N$, there exists $\varepsilon_{n_0} > 0$ so that, 
for all 
$\varepsilon \in (0, \varepsilon_{n_0})$,
\begin{equation}\label{asymptoticintroduction}
E_n(\varepsilon, \theta) = \frac{\lambda_0}{\varepsilon^2} + \kappa_n(\theta) + O(\varepsilon),
\end{equation}
holds for each $n=1,2, \cdots, n_0$, uniformly in ${\cal C}$.
\end{Corollary}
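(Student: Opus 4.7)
The plan is to deduce the corollary directly from the uniform resolvent estimate in Theorem \ref{reductionofdimension} via the Courant--Fischer min--max principle. Once the compact operator $(T_\varepsilon^\theta - (\lambda_0/\varepsilon^2)\Id)^{-1}$ is within $O(\varepsilon)$, in norm and uniformly on $\mathcal{C}$, of the compact limit $(T^\theta)^{-1}\oplus\mathbf{0}$, Weyl's inequality gives $O(\varepsilon)$ control on the $n$-th eigenvalue, and a reciprocal gives the desired asymptotic on the $E_n$ side.

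More concretely, set $A_\varepsilon^\theta := T_\varepsilon^\theta - (\lambda_0/\varepsilon^2)\Id$ and $B^\theta := (T^\theta)^{-1}\oplus\mathbf{0}$ on $L^2(Q)=\mathcal{L}\oplus\mathcal{L}^\perp$. Since $\mathcal{V}_\varepsilon$ is unitary, Theorem \ref{reductionofdimension} yields
\[
\sup_{\theta\in\mathcal{C}}\bigl\|(A_\varepsilon^\theta)^{-1}-B^\theta\bigr\|\leq K\varepsilon.
\]
Because $c>\|k^2/4\|_\infty$ (and $C(S)\ge 0$), the potential of $T^\theta$ is bounded below by $c_0:=c-\|k^2/4\|_\infty>0$ uniformly in $\theta$, hence $\kappa_n(\theta)\ge c_0$ for all $n,\theta$. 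The nonzero eigenvalues of $B^\theta$ are precisely $\{1/\kappa_n(\theta)\}_{n\ge 1}$, and for $n\leq n_0$ they lie in a compact subinterval of $(0,\infty)$ that depends only on $n_0$, by continuity of $\kappa_n$ on the compact $\mathcal{C}$.

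Arranging the eigenvalues of the self-adjoint compact operators $(A_\varepsilon^\theta)^{-1}$ and $B^\theta$ in decreasing order, the min--max principle gives $|\mu_n((A_\varepsilon^\theta)^{-1})-\mu_n(B^\theta)|\leq K\varepsilon$ uniformly in $\theta\in\mathcal{C}$. For $\varepsilon<\varepsilon_{n_0}$ small enough, the first $n_0$ such eigenvalues are therefore positive and trapped in a fixed compact subset of $(0,\infty)$, and under the order-reversing inversion they match the $n_0$ smallest eigenvalues of $A_\varepsilon^\theta$, i.e.\ $\{E_n(\varepsilon,\theta)-\lambda_0/\varepsilon^2\}_{n=1}^{n_0}$ in non-decreasing order. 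Clearing denominators in
\[
\left|\frac{1}{E_n(\varepsilon,\theta)-\lambda_0/\varepsilon^2}-\frac{1}{\kappa_n(\theta)}\right|\leq K\varepsilon
\]
and multiplying by the uniformly bounded product $\kappa_n(\theta)\,(E_n(\varepsilon,\theta)-\lambda_0/\varepsilon^2)$ yields $|E_n(\varepsilon,\theta)-\lambda_0/\varepsilon^2-\kappa_n(\theta)|\leq K'\varepsilon$, as claimed.

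The main obstacle is ensuring that the first $n_0$ eigenvalues of $(A_\varepsilon^\theta)^{-1}$ in decreasing order really correspond to the first $n_0$ of $E_n(\varepsilon,\theta)$ in non-decreasing order, and not to some intruder coming from $\mathcal{L}^\perp$. This is handled by the Dirichlet gap $\lambda_1-\lambda_0>0$ of $-\Delta_S^D$: any transverse-excited eigenvalue of $T_\varepsilon^\theta$ exceeds $\lambda_0/\varepsilon^2$ by at least $(\lambda_1-\lambda_0)/\varepsilon^2-O(1)$, so its contribution to $(A_\varepsilon^\theta)^{-1}$ is $O(\varepsilon^2)$, consistent with the zero part of $B^\theta$ on $\mathcal{L}^\perp$. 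Thus only $n_0$ eigenvalues of $A_\varepsilon^\theta$ remain of order $1$, and the identification above is unambiguous for $\varepsilon<\varepsilon_{n_0}$.
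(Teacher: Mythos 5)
Your proposal is correct and follows essentially the same route as the paper: pass from the uniform resolvent estimate of Theorem \ref{reductionofdimension} to an eigenvalue comparison for the inverses (the paper cites Corollary 2.3 of Gohberg--Kre\u{\i}n where you invoke min--max/Weyl, which is the same underlying fact), then clear denominators using the uniform boundedness of $\kappa_n(\theta)$ and of $E_n(\varepsilon,\theta)-\lambda_0/\varepsilon^2$ for $n\le n_0$. Your explicit treatment of the ordering and of possible ``intruder'' eigenvalues from $\mathcal{L}^\perp$ is a point the paper leaves implicit, but it does not change the argument.
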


For simplicity of notation, write
\[V(s):= C(S) (\tau+\alpha')^2(s) + c - \frac{k^2(s)}{4}.\]

\begin{Theorem}\label{corgap}
Suppose that $V(s)$ is  not constant. Then, there exist  $n_1 \in \mathbb N$, $\varepsilon_{n_1+1} > 0$ and
$C_{n_1} > 0$ so that, for all $\varepsilon \in (0, \varepsilon_{n_1+1})$,
\begin{equation}\label{locbanddiri}
\min_{\theta \in {\cal C}} E_{n_1+1} (\varepsilon, \theta) 
-
\max_{\theta \in {\cal C}} E_{n_1} (\varepsilon, \theta) =
C_{n_1} + O(\varepsilon).
\end{equation}
\end{Theorem}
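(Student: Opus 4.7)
The plan is to combine the uniform asymptotic expansion in Corollary \ref{asymptoticbehaviour} with classical Floquet-Bloch theory for one-dimensional periodic Schr\"odinger operators. Since that corollary yields, uniformly in $\theta \in \mathcal{C}$,
\[
E_n(\varepsilon, \theta) = \frac{\lambda_0}{\varepsilon^2} + \kappa_n(\theta) + O(\varepsilon), \quad n = 1, \ldots, n_0,
\]
the gap on the left-hand side of (\ref{locbanddiri}) will be inherited from a genuine spectral gap of the limit one-dimensional family $\{T^\theta\}_{\theta \in \mathcal{C}}$. Once an integer $n_1$ with
\[
C_{n_1} := \min_{\theta \in \mathcal{C}} \kappa_{n_1+1}(\theta) - \max_{\theta \in \mathcal{C}} \kappa_{n_1}(\theta) > 0
\]
is located, one applies the corollary with $n_0 = n_1 + 1$ and takes $\min_\theta$ and $\max_\theta$ on both sides of the asymptotic to obtain (\ref{locbanddiri}) directly.

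The existence of such an $n_1$ is precisely where the hypothesis that $V$ is not constant enters. I would interpret the family $\{T^\theta\}_{\theta \in \mathcal{C}}$ as the Floquet-Bloch fibration of the self-adjoint Hill-type operator $H := -d^2/ds^2 + V(s)$ on $L^2(\mathbb{R})$; then the $n$th band $[\min_\theta \kappa_n(\theta), \max_\theta \kappa_n(\theta)]$, together with the gaps between consecutive bands, reconstruct $\sigma(H)$. By Borg's theorem, a continuous $L$-periodic potential whose associated Schr\"odinger operator has spectrum of the form $[E_0, \infty)$ (i.e., with no gap at all) must be constant. Contrapositively, since $V$ is continuous, $L$-periodic and non-constant by assumption, at least one band gap is present, which supplies the required $n_1$.

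The main subtlety lies in invoking Corollary \ref{asymptoticbehaviour} rather than Corollary \ref{asymptoticbehaviournew}. Under the non-decreasing rearrangement, the functions $\theta \mapsto E_n(\varepsilon, \theta)$ are only continuous and piecewise analytic on $\mathcal{C}$, and potential level crossings near $\theta = 0, \pm \pi/L$ would otherwise obstruct a uniform estimate. Corollary \ref{asymptoticbehaviour} is designed so that the $O(\varepsilon)$ remainder is uniform over the full Brillouin zone $\mathcal{C}$, which is exactly what permits interchanging the operations $\min_\theta$ and $\max_\theta$ with the asymptotic while preserving an $O(\varepsilon)$ error. The rest is bookkeeping: fix the $n_1$ coming from Borg's theorem, choose $\varepsilon_{n_1+1} > 0$ so that the corollary applies for $n = 1, \ldots, n_1 + 1$ and the remainder is smaller than $C_{n_1}/4$, and then (\ref{locbanddiri}) follows immediately.
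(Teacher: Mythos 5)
Your proposal is correct and follows essentially the same route as the paper: both arguments identify the bands of the limiting Hill operator $-d^2/ds^2+V$, invoke Borg's theorem to deduce from the non-constancy of $V$ that at least one gap $G_{n_1}$ is open, and then transfer this gap to the bands $E_n(\varepsilon,\mathcal{C})$ via the $\theta$-uniform asymptotics of Corollary \ref{asymptoticbehaviour}. Your remark on why the uniformity over all of $\mathcal{C}$ (rather than over a compact $\mathcal{K}$ avoiding $0,\pm\pi/L$) is needed matches the role of Corollary \ref{corcorrected} in the paper.
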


Theorem \ref{corgap} ensures  that at least one gap appears in the spectrum 
$\sigma(-\Delta_{\Omega_\varepsilon}^D)$ for
$\varepsilon > 0$ small enough.
Its proof is based on arguments of
\cite{borg, Yoshi}
and will be presented in Section \ref{sectionbandgaps}.

%%%%%%%%%%%%%%%%%%%%%%%%%%%%%%%%%%%%%%%% Location gaps  %%%%%%%%%%%%%%%%%%%%%%%%%%%%%%%

With the next result, it will be possible to find a location where (\ref{locbanddiri}) holds true. However,
some adjustments will be necessary.

For $\gamma > 0$, we use the scales 
\begin{equation}\label{scale}
k(s) \mapsto \gamma \, k(s), \quad
(\tau+\alpha')(s) \mapsto \gamma \, (\tau+\alpha')(s) \quad \hbox{and}
\quad c \mapsto \gamma^2 \, c.
\end{equation}
Thus, we obtain a  new region $\Omega_{\gamma, \varepsilon}$ and we consider
$-\Delta_{\Omega_{\gamma, \varepsilon}}^D$ instead of $-\Delta_{\Omega_\varepsilon}^D$. 
Denote by $T_{\gamma, \varepsilon}$ and $T_{\gamma, \varepsilon}^\theta$
the operators obtained by replacing (\ref{scale}) in (\ref{introdfirstchange})
and (\ref{fiberoperatort}), respectively.
Denote by $E_n(\gamma, \varepsilon, \theta)$ the $n$th 
eigenvalue of $T_{\gamma, \varepsilon}^\theta$ counted with multiplicity.

Expand the function $V(s)$ as a Fourier series, i.e.,
\[V(s) = \sum_{n=-\infty}^{+\infty} \frac{1}{\sqrt{L}} \nu_n e^{2\pi n i s/L}
\quad \hbox{in} \,\, L^2(0,L),\]
where the sequence $\{\nu_n\}_{n=-\infty}^{+\infty}$ is called Fourier coefficients of $V(s)$.
Since $V(s)$ is a real function, $\nu_n=\overline{\nu}_{-n}$, for all $n \in \mathbb Z$.
We have the following result.
\begin{Theorem}\label{locationgapsgaps}
Suppose that $V(s)$ is not constant, and let 
$n_2 \in \mathbb N$ so that $\nu_{n_2} \neq 0$.
Then, there exist $\gamma > 0$ small enough, $\varepsilon_{n_2+1} > 0$ and $C_{\gamma, n_2} > 0$ so that,
for all $\varepsilon \in (0, \varepsilon_{n_2+1})$,
\[\min_{\theta \in {\cal C}} E_{n_2+1} (\gamma, \varepsilon, \theta) 
-
\max_{\theta \in {\cal C}} E_{n_2} (\gamma,\varepsilon, \theta) =
C_{\gamma,n_2} + O(\varepsilon).\]
\end{Theorem}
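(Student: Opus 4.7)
The plan is to reduce the problem, via the scaling (\ref{scale}) and the scaled version of Corollary \ref{asymptoticbehaviour}, to a perturbative spectral analysis of a one-dimensional Hill-type operator with a small potential. Under (\ref{scale}) each of $k^2/4$, $C(S)(\tau+\alpha')^2$ and $c$ is multiplied by $\gamma^2$, so the potential in the limiting one-dimensional fiber operator becomes $\gamma^2 V(s)$, and the relevant reduced operator is
\[T_\gamma^\theta w := (-i\partial_s + \theta)^2 w + \gamma^2 V(s) w\]
on $L^2(0,L)$ with periodic boundary conditions. Denote the nondecreasing eigenvalues of $T_\gamma^\theta$ by $\kappa_n(\gamma,\theta)$. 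Running the proofs of Theorem \ref{reductionofdimension} and Corollary \ref{asymptoticbehaviour} verbatim with $T_{\gamma,\varepsilon}^\theta$ in place of $T_\varepsilon^\theta$ yields, for each fixed small $\gamma$ and all sufficiently small $\varepsilon$,
\[E_n(\gamma,\varepsilon,\theta) = \frac{\lambda_0}{\varepsilon^2} + \kappa_n(\gamma,\theta) + O(\varepsilon),\quad n=1,\ldots,n_2+1,\]
uniformly in $\theta \in {\cal C}$. It therefore suffices to exhibit a constant $C_{\gamma,n_2}>0$ with $\min_\theta \kappa_{n_2+1}(\gamma,\theta) - \max_\theta \kappa_{n_2}(\gamma,\theta) = C_{\gamma,n_2}$ for some small $\gamma$.

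At $\gamma=0$ the spectrum of $T_0^\theta$ is explicit, with eigenvalues $(\theta + 2\pi m/L)^2$ and eigenfunctions $\phi_m(s) := L^{-1/2} e^{2\pi i m s/L}$, $m \in \mathbb Z$. Ordering nondecreasingly and inspecting the Brillouin zone shows that the only band touchings occur at $\theta = 0$ (between $\phi_m$ and $\phi_{-m}$, producing contact of the $2m$th and $(2m+1)$th bands) and at $\theta = \pm\pi/L$ (between $\phi_m$ and $\phi_{-1-m}$, producing contact of the $(2m+1)$th and $(2m+2)$th bands). For the index $n_2$, let $\theta^* \in \{0, \pm \pi/L\}$ and $\{\phi_m,\phi_{m'}\}$ denote the corresponding touching point and degenerate pair. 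A direct Fourier-series computation against the Fourier expansion of $V$ gives
\[|\langle \phi_m, V \phi_{m'}\rangle_{L^2(0,L)}| = \frac{|\nu_{n_2}|}{\sqrt L},\]
and standard degenerate perturbation theory for the Hermitian family $T_\gamma^{\theta^*}$ then produces the leading splitting
\[\kappa_{n_2+1}(\gamma,\theta^*) - \kappa_{n_2}(\gamma,\theta^*) = \frac{2\gamma^2 |\nu_{n_2}|}{\sqrt L} + O(\gamma^4).\]

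Away from the unperturbed touching points the spectral separations of $T_0^\theta$ are bounded below by a positive constant depending only on $n_2$, so the $O(\gamma^2)$ perturbation $\gamma^2 V$ (of operator norm at most $\gamma^2 \|V\|_\infty$) cannot close them when $\gamma$ is small. Combined with the hypothesis $\nu_{n_2}\neq 0$, this forces the global separation $C_{\gamma,n_2} \sim 2\gamma^2 |\nu_{n_2}|/\sqrt L > 0$ between the $n_2$th and $(n_2+1)$th perturbed bands, which together with the asymptotic of the first paragraph yields the theorem. The main obstacle is the gap-opening estimate itself: one has to verify that the effective $2 \times 2$ reduction at $\theta^*$ actually controls the splitting through order $\gamma^2$, and that no new crossings between the $n_2$th and $(n_2+1)$th bands appear elsewhere in ${\cal C}$. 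Both ingredients are supplied by the analytic perturbation framework of Chapter~7 in \cite{kato} applied to the family $T_\gamma^\theta$ (jointly analytic in $(\gamma,\theta)$) together with the classical Borg-type argument of \cite{borg, Yoshi}.
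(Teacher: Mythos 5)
Your proposal is correct and follows essentially the same route as the paper: rescale via (\ref{scale}), reduce through the scaled versions of Theorem \ref{reductionofdimension} and Corollary \ref{asymptoticbehaviour} to the one-dimensional Hill operator $T_\gamma^\theta=(-i\partial_s+\theta)^2+\gamma^2 V$, and show that the gap between the $n_2$th and $(n_2+1)$th bands opens at order $\gamma^2$ with leading coefficient $2|\nu_{n_2}|/\sqrt{L}$. The only difference is that where the paper simply cites Yoshitomi's splitting asymptotic (Theorem \ref{theoapp}) as a black box, you re-derive it by degenerate perturbation theory at the free band touchings; the monotonicity of the band functions (Reed--Simon XIII.16), which you invoke via the Kato/Borg framework, is the same structural input the paper uses in Corollary \ref{cornamtw} to convert the edge splitting into the quantity $\min_{\theta}\kappa_{n_2+1}(\gamma,\theta)-\max_{\theta}\kappa_{n_2}(\gamma,\theta)$.
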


As Theorem \ref{corgap}, the proof of Theorem  \ref{locationgapsgaps} is based on \cite{Yoshi}
and will be presented in Section \ref{locgaps}.

This work is written as follows. In Section \ref{changecoordinates}
we construct with details the tube $\Omega_\varepsilon$
where the Dirichlet Laplacian operator is considered.
In the same section, we realize a change of coordinates that allows us ``straight''
$\Omega_\varepsilon$, i.e., to work in the Hilbert space $L^2(\mathbb R \times S, \beta_\varepsilon \ds \dy)$.
In Section \ref{floquetsection} we perform the Floquet-Bloch decomposition and
prove  Lemma \ref{analyticfamily}.
Section \ref{reductiondimsec} is intended at proofs of Theorem 
\ref{reductionofdimension} and Corollary \ref{asymptoticbehaviour} (Corollary \ref{asymptoticbehaviournew} can be proven in 
a similar way and we omit its proof in this text).
Sections \ref{sectionbandgaps} and \ref{locgaps} are dedicated to the proofs of Theorems
\ref{corgap} and \ref{locationgapsgaps}, respectively.

A long the text, the symbol $K$ is used to denote different constants and it never depends on 
$\theta$.

\section{Geometry of the domain and change of coordinates}\label{changecoordinates}

Let $r : \mathbb R \to \mathbb R^3$ be a simple $C^3$ curve in~$\mathbb R^3$ parametrized by
its arc-length parameter~$s$. We suppose that $r$ is periodic, i.e., there exists $L>0$ 
and a nonzero vector $u$
so that
\[r(s+L)=u+r(s), \qquad \forall s \in \mathbb R.\]
The curvature of~$r$ at the position~$s$ is $k(s) := \| r''(s)\|$. We assume $k(s)>0$, for all $s \in \mathbb R$. 
Then, $r$ is endowed with the Frenet frame $\{T(s), N(s), B(s)\}$ given by the
tangent, normal and binormal vectors, respectively, moving along the curve and defined by
\[T = r'; \quad N = k^{-1} T'; \quad B = T \times N.\]
The Frenet equations are satisfied, that is,
\begin{equation}\label{frame}
\left(\begin{array}{c}
T' \\
N' \\
B'
\end{array}\right) =
\left( 
\begin{array}{ccc}
0 & k & 0 \\
-k & 0 & \tau \\
0 & -\tau & 0 
\end{array}
\right)
\left(\begin{array}{c}
T \\
N \\
B
\end{array}\right),
\end{equation}
where~$\tau(s)$ is the torsion of $r(s)$, actually defined by~\eqref{frame}. 
More generally, we can consider the case where $r$ has pieces of straight lines, i.e., $k = 0$ 
identically in these pieces. In this situation, the construction of a $C^2$ Frenet frame is 
described in Section 2.1 of \cite{ekholmk}. As another alternative, one can assume the Assumption 1 from $\cite{duclosfk}$.
For simplicity, we also denote by $\{ T(s), N(s), B(s)\}$  the Frenet frame in those cases.

Let $\alpha: \mathbb R \to \mathbb R$ be an $L$-periodic  and 
$C^1$ function so that $\alpha(0) = 0$, and $S$ an open, bounded,  connected and smooth (nonempty) subset of~$\mathbb R^2$. For~$\var > 0$ small enough and $y = (y_1, y_2) \in S$, write
$$x(s, y) = r(s) + \var  y_1 N_\alpha (s) + \var y_2 B_\alpha(s)$$
and consider the domain
$$\Omega_\var = \{x(s,y) \in \mathbb R^3: s \in \mathbb R, y=(y_1, y_2) \in S\},$$
where
\begin{eqnarray*}
N_\alpha(s) & : = &  \cos \alpha(s) N(s) + \sin \alpha(s) B(s), \\
B_\alpha(s) & := &  - \sin \alpha(s) N(s) + \cos \alpha(s) B(s).
\end{eqnarray*}

Hence, this tube 
$\Omega_\var$ 
is obtained by putting the region $\var S$ along the curve
$r(s)$, which is simultaneously rotated by an angle $\alpha(s)$ with respect to the
cross section at the position $s= 0$.

As already mentioned in the Introduction,
let $-\Delta_{\Omega_\varepsilon}^D$ be the Friedrichs extension of the Laplacian operator
$-\Delta$ in $L^2(\Omega_\var)$ with domain $C_0^\infty(\Omega_\var)$.

The next step is to perform a change of variables 
so that  $\Omega_\varepsilon$ is homeomorphic to the straight cylinder $\mathbb R \times S$.
Consider the mapping
$$\begin{array}{llll}
F_\var: & \mathbb R \times S & \to & \Omega_\var \\
     & (s,y)   & \mapsto     & r(s) + \var y_1 N_\alpha(s) + \var y_2 B_\alpha(s).
\end{array}$$
In the new variables,  the Dirichlet Laplacian 
$-\Delta_{\Omega_\varepsilon}^D$ will be unitarily equivalent to one operator
acting in $L^2(\mathbb R \times S, \beta_\varepsilon \ds \dy)$; see definition of $\beta_\varepsilon$ below. 
The price to be paid is a nontrivial Riemannian metric $G = G_\var^\alpha$  which is induced by~$F_\var$, i.e.,
\begin{equation}\label{metric}
G=(G_{ij}), \qquad G_{ij} = \langle e_i, e_j \rangle = G_{ji}, \qquad 1 \leq i, j \leq 3,
\end{equation}
where
$$e_1 = \frac{\partial F_\var}{\partial s}, \quad
e_2 = \frac{\partial F_\var}{\partial y_1},
\quad
e_3 = \frac{\partial F_\var}{\partial y_2}.$$

Some calculations show that in the Frenet frame
$$J :=
\left(\begin{array}{c}
e_1 \\
e_2 \\
e_3
\end{array}\right) =
\left(
\begin{array}{ccc}
\beta_\var   & -\var (\tau+\alpha') \langle z_\alpha^\perp, y \rangle & \var (\tau+\alpha') \langle z_\alpha, y \rangle \\
0    & \var \cos \alpha  & \var \sin \alpha \\
0    & -\var \sin \alpha   & \var \cos \alpha
\end{array}
\right),$$
where 
\begin{equation}\label{defbetarestric}
\beta_\var(s,y) := 1 - \var k(s) \langle z_\alpha, y \rangle,
\quad z_\alpha := (\cos \alpha, -\sin \alpha), \quad {\hbox{ and }} \quad
z_\alpha^\perp := (\sin \alpha, \cos \alpha).
\end{equation}
The inverse matrix  of $J$ is given by
$$\displaystyle J^{-1} =
\left(
\begin{array}{ccc}
1/\beta_\var   &   (\tau+\alpha') y_2/\beta_\var  &  -(\tau+\alpha') y_1/\beta_\var \\
0 & (1/\var) \cos \alpha   & -(1/\var)\sin \alpha   \\
0   & (1/\var) \sin \alpha   & (1/\var) \cos \alpha
\end{array}
\right).$$

Note that $JJ^t = G$ and $\det J = | \det G|^{1/2} = \var^2 \beta_\var$. Since $k$ is a bounded function, for $\var$ small 
enough, $\beta_\var$ does not vanish in $\mathbb R \times S$. Thus,  $\beta_\var> 0$ and~$F_\var$ is a local diffeomorphism. 
By requiring that $F_\var$ is injective (i.e., the tube is not self-intersecting), a global diffeomorphism is obtained.

Finally, consider the unitary transformation
$$\begin{array}{cccc}
{\cal J}_\var:     &      \LL^2(\Omega_\var) & \rightarrow  & \LL^2(\mathbb R \times S, \beta_\var \ds \dy) \\
            &      u                & \mapsto      & \var \,  u \circ F_\var
\end{array},$$
and recall the operator $T_\varepsilon$ given by (\ref{introdfirstchange}) in the Introduction.
After some straightforward calculations, we can show that
${\cal J}_\varepsilon (-\Delta_{\Omega_\varepsilon}^D) {\cal J}_\varepsilon^{-1} \psi = T_\varepsilon \psi$, where
$\dom T_\varepsilon= {\cal J}_\varepsilon (\dom (-\Delta_{\Omega_\varepsilon}^D))$. 
From now on, we start to study  $T_\varepsilon$.

\section{Floquet-Bloch decomposition}\label{floquetsection}

Since the coefficients  of $T_\varepsilon$ are periodic with respect to~$s$, in this section we
perform the Floquet-Bloch reduction over the Brillouin zone ${\cal C} =[-\pi/L, \pi/L]$.
For simplicity of notation, we write
$\Omega:=\mathbb R \times S$, 
\[{\cal H}_\varepsilon := L^2(\Omega, \beta_\varepsilon \ds \dy), \quad
\tilde{{\cal H}}_\varepsilon := L^2(Q, \beta_\varepsilon \ds \dy).\]
Recall that $Q=(0,L) \times S$.

\begin{Lemma}\label{floquetdecomposition}
There exists a unitary operator 
${\cal U}_\varepsilon: {\cal H}_\varepsilon \to 
\int_{{\cal C}}^\oplus \tilde{{\cal H}}_\varepsilon \, d\theta$, so that,
\[{\cal U}_\varepsilon \, T_\varepsilon \, {\cal U}_\varepsilon^{-1} = \int_{{\cal C}}^\oplus 
T_\varepsilon^\theta \, d\theta,\]
where
\[T_\varepsilon^\theta \psi := \frac{1}{\beta_\varepsilon} 
(-i\partial_{sy}^R + \theta) \beta_\varepsilon^{-1} (-i\partial_{sy}^R + \theta) \psi
- \frac{1}{\varepsilon^2 \beta_\varepsilon} {\rm div}(\beta_\varepsilon \nabla_y \psi) + c \, \psi,\]
and, 
\[ \dom T_\varepsilon^\theta = \{\psi \in H^2(Q): \psi(s,y) = 0 \,\, \hbox{on} \,\, \partial Q \backslash  
\left(\{0, L\} \times S\right),\]
\hspace{5cm}$\psi(L,y)=\psi(0,y) \,\, \hbox{in} \,\, L^2(S), \, \psi'(L,y)=\psi'(0,y) \,\, \hbox{in} \,\, L^2(S)\}.$
\vspace{0.3cm}

\noindent
Furthermore,
for each $\theta \in {\cal C}$,  $T_\varepsilon^\theta$ is self-adjoint.
\end{Lemma}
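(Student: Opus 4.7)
The plan is to construct the Floquet-Bloch transform in the standard way, verify that conjugation by it produces the stated fiber operator $T_\varepsilon^\theta$, and then verify self-adjointness on each fiber via the associated quadratic form. For the transform, I would first define, on a dense set of smooth functions of compact support in $s$,
\[
({\cal U}_\varepsilon \psi)(\theta)(s,y) := \sqrt{\frac{L}{2\pi}} \sum_{n \in \mathbb Z} e^{-i\theta(s+nL)} \psi(s+nL, y),
\quad (s,y) \in Q,\ \theta \in {\cal C},
\]
and extend by continuity. Unitarity from ${\cal H}_\varepsilon$ onto $\int_{\cal C}^\oplus \tilde{\cal H}_\varepsilon \, d\theta$ follows from Plancherel on $L^2(\mathbb R)$ combined with the $L$-periodicity of the weight $\beta_\varepsilon$ in $s$, which ensures the pointwise measures on each period cell agree.

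Next I would compute ${\cal U}_\varepsilon T_\varepsilon {\cal U}_\varepsilon^{-1}$ on the image of test functions. The two key observations are: (i) $\beta_\varepsilon$, $k$, $\tau$ and $\alpha'$ are $L$-periodic in $s$, so multiplication by any of them commutes with ${\cal U}_\varepsilon$ and acts fiberwise; (ii) because $\partial_{sy}^R = \partial_s + (\tau+\alpha')(s)\langle \nabla_y, Ry\rangle$ involves $\partial_s$ only through the first summand and the $y$-derivatives are untouched by the shift in $s$, the identity $-i\partial_s \,({\cal U}_\varepsilon^{-1} \phi) = {\cal U}_\varepsilon^{-1}[(-i\partial_s + \theta)\phi]$ propagates to $-i\partial_{sy}^R \mapsto -i\partial_{sy}^R + \theta$ on each fiber. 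Inserting these identifications into the expression for $T_\varepsilon$ gives exactly the stated expression for $T_\varepsilon^\theta$. For the domain: the Dirichlet condition on $\partial \Omega_\varepsilon$ survives on the lateral part $\partial Q \setminus (\{0,L\}\times S)$, while the matching conditions at the ends of each period cell, after extracting the factor $e^{i\theta s}$, reduce to the ordinary periodicity $\psi(0,y)=\psi(L,y)$ and $\psi'(0,y)=\psi'(L,y)$; the $H^2(Q)$ regularity is inherited from the standard elliptic regularity available for $T_\varepsilon$ fiberwise.

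For self-adjointness, since $\beta_\varepsilon$ is bounded above and away from zero on $Q$ for $\varepsilon$ small enough and $c$ was chosen large enough to dominate the lower-order terms, I would introduce the sesquilinear form
\[
t_\varepsilon^\theta(\psi,\phi) := \int_Q \beta_\varepsilon^{-1} (-i\partial_{sy}^R + \theta)\psi \; \overline{(-i\partial_{sy}^R+\theta)\phi} \,\beta_\varepsilon \,\ds\dy + \frac{1}{\varepsilon^2}\int_Q \langle \nabla_y \psi, \nabla_y \phi\rangle \, \beta_\varepsilon \,\ds\dy + c\,\langle \psi,\phi\rangle_{\tilde{\cal H}_\varepsilon},
\]
on the form domain consisting of $H^1(Q)$ functions vanishing on the lateral boundary and satisfying $\psi(0,y)=\psi(L,y)$. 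This form is densely defined, symmetric, bounded below, and closed (the closedness uses the uniform bounds on $\beta_\varepsilon$ to compare with the standard $H^1$-norm), so its associated self-adjoint operator is precisely $T_\varepsilon^\theta$ with the stated domain by the first representation theorem. The main obstacle, in my view, is the bookkeeping in step (ii): carefully tracking how the mixed-derivative operator $\partial_{sy}^R$ with its $L$-periodic but $s$-dependent coefficient $(\tau+\alpha')$ intertwines with the Floquet transform, and checking that the quasi-periodic conditions in $s$ become exactly the periodic conditions stated in $\dom T_\varepsilon^\theta$ rather than something twisted by $\theta$.
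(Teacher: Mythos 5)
Your proposal is correct and follows essentially the same route as the paper: the paper defines the identical Floquet--Bloch transform $({\cal U}_\varepsilon f)(\theta,s,y)=\sum_{n\in\Z}\sqrt{L/2\pi}\,e^{-inL\theta-i\theta s}f(s+Ln,y)$ (a modification of Theorem XIII.88 in Reed--Simon) and then refers to \cite{Yoshi} for the detailed fiberwise computation, which is exactly the bookkeeping you carry out. Your additional verification that $-i\partial_{sy}^R\mapsto -i\partial_{sy}^R+\theta$ on each fiber and the form-theoretic argument for self-adjointness supply precisely the details the paper omits, so there is nothing to correct.
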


\begin{proof}
As in \cite{bento},
for $(\theta, s,y) \in {\cal C} \times Q$ define
\[
({\cal U}_\varepsilon f)(\theta, s,y) := \sum_{n \in \mathbb Z} 
\sqrt{\frac{L}{2\pi}} e^{-inL\theta-i\theta s} f(s+Ln,y).\]
This transformation is a modification of Theorem XIII.88 in \cite{reed}.
As a consequence, the domain of the fibers operators $T_\varepsilon^\theta$ keep the same.

With respect to the proof of this lemma,
a detailed proof for periodic strips in the plane can be found in \cite{Yoshi}.
The argument for periodic waveguides in $\mathbb R^3$ is analogous and will be omitted in this text.
\end{proof}

\begin{Remark}
{\rm
Although $T_\varepsilon^\theta$ acts in the Hilbert space $\tilde{{\cal H}}_\varepsilon$, the operator
$\partial_{sy}^R \psi$ in its definition has action given by (\ref{partialxynotation}) (see Introduction) and
$\beta_\varepsilon$ is given by (\ref{defbetarestric}) (see Section \ref{changecoordinates}). For simplicity, 
we keep the same notation.}
\end{Remark}

Now, we present the proof of Lemma \ref{analyticfamily} stated in the Introduction.

\vspace{0.3cm}
\noindent
{\bf Proof of Lemma \ref{analyticfamily}:}
For each $\theta \in {\cal C}$, write $T_\varepsilon^\theta = T_\varepsilon^0 + V_\varepsilon^\theta$,
where, for $\psi \in \dom T_\varepsilon^0$,
\begin{eqnarray*}
V_\varepsilon^\theta \psi & := & (T_\varepsilon^\theta - T_\varepsilon^0)\psi \\
& = &
(-2 i \theta/\beta_\varepsilon^2) \partial_{sy}^R \psi + 
\left[-i\theta (\partial_{sy}^R \beta_\varepsilon^{-1})/\beta_\varepsilon + 
\theta^2/\beta_\varepsilon^2\right]
\psi.
\end{eqnarray*}

We affirm that $V_\varepsilon^\theta$ is $T_\varepsilon^0$-bounded with
zero relative bound. In fact, denote 
$R_z=R_z(T_\varepsilon^0) = (T_\varepsilon^0 - z \Id)^{-1}$.
Take $z \in \mathbb C$ with ${\rm img}\,z \neq 0$.
Since all coefficients of $V_\varepsilon^\theta$ are bounded, there exists $K>0$, so that,
\begin{eqnarray*}
\|V_\varepsilon^\theta \psi\|_{\tilde{{\cal H}}_\varepsilon}^2 & = & 
\int_{Q} |V_\varepsilon^\theta \psi|^2 \beta_\varepsilon \dx \dy\\
& \leq &
K \, \left(\langle  \psi, T_\varepsilon^0 \psi \rangle_{\tilde{{\cal H}}_\varepsilon}  + 
\|\psi\|^2_{\tilde{{\cal H}}_\varepsilon} \right) \\
& \leq &
K \, \left(\langle R_z(T_\varepsilon^0 - z \Id) \psi, T_\varepsilon^0 \psi \rangle_{\tilde{{\cal H}}_\varepsilon}  
+  \|\psi\|^2_{\tilde{{\cal H}}_\varepsilon} \right)\\
&\leq &
K \, \left(
\langle R_z T_\varepsilon^0 \psi, T_\varepsilon^0 \psi \rangle_{\tilde{{\cal H}}_\varepsilon}  
+  |z| \langle \psi, R_{\overline{z} } T_\varepsilon^0 \psi \rangle_{\tilde{{\cal H}}_\varepsilon} 
+ \|\psi\|^2_{\tilde{{\cal H}}_\varepsilon} \right) \\
& \leq &
K \, \left(\|R_z T_\varepsilon^0 \psi\|_{\tilde{{\cal H}}_\varepsilon} 
\|T_\varepsilon^0 \psi\|_{\tilde{{\cal H}}_\varepsilon} +
|z| \langle \psi, (\Id+\overline{z} R_{z}) \psi \rangle_{\tilde{{\cal H}}_\varepsilon} 
+  \|\psi\|^2_{\tilde{{\cal H}}_\varepsilon} \right) \\
& \leq &
K \, \left[ \|R_z\|_{\tilde{{\cal H}}_\varepsilon}  \|T_\varepsilon^0 \psi\|_{\tilde{{\cal H}}_\varepsilon}^2 + 
  \left( |z| +|z|^2 \|R_{z} \|_{\tilde{{\cal H}}_\varepsilon} + 1 \right) 
\|\psi\|_{\tilde{{\cal H}}_\varepsilon}^2 \right],
\end{eqnarray*}
for all  $\psi \in \dom T_\varepsilon^0$ and all $\theta \in {\cal C}$.
In the first inequality we use the Minkovski inequality and  the
property  
$ab\leq (a^2+b^2)/2$, for all $a,b \in \mathbb R$. 
In the third one, we used that $R_{\overline{z}} T_\varepsilon^0 = \Id + \overline{z} R_z$.

Since $\|R_z\|_{\tilde{{\cal H}}_\varepsilon} \to 0$, as ${\rm img}\,z \to \infty$, 
the affirmation is proven. So, the lemma follows.

\section{Proof of Theorem \ref{reductionofdimension} and 
Corollary \ref{asymptoticbehaviour}}\label{reductiondimsec}

This section is dedicated to prove Theorem \ref{reductionofdimension}.
Some steps are very similar to that in \cite{oliveira04} and require only an 
adaptation. Because this, most calculations will be omitted here.

Since $T_\varepsilon^\theta > 0$ is self-adjoint, there exists a 
closed sesquilinear form $t_\varepsilon^\theta > 0$, so that,
$\dom T_\varepsilon^\theta \subset \dom t_\varepsilon^\theta$
(actually, $\dom T_\varepsilon^\theta$ is a core of $\dom t_\varepsilon^\theta$) and
\[t_\varepsilon^\theta(\phi, \varphi)=\langle \phi, T_\varepsilon^\theta \varphi\rangle,
\quad \forall \phi \in \dom t_\varepsilon^\theta, \forall \varphi \in \dom T_\varepsilon^\theta;\]
see Theorem $4.3.1$ of \cite{livrocesar}.

For $\varphi \in \dom T_\varepsilon^\theta$, the quadratic form 
$t_\varepsilon^\theta(\varphi) :=t_\varepsilon^\theta(\varphi, \varphi)$ acts as
\[t_\varepsilon^\theta(\varphi)  =
\int_Q
\frac{1}{\beta_\varepsilon} \left|\left(-i \partial_{sy}^R + \theta \right)\varphi \right|^2  \ds \dy 
+
\int_Q \frac{\beta_\varepsilon}{\varepsilon^2}  |\nabla_y \varphi|^2 \ds \dy +
c \int_Q  \beta_\varepsilon |\varphi|^2 \ds \dy.\]

We are interested in studying $t_\varepsilon^\theta(\varphi)$ for $\varepsilon > 0$ small enough.
However,  it is necessary to control the term 
$(1/\varepsilon^2)\int_Q \beta_\varepsilon |\nabla_y \varphi|^2 \ds \dy$,
as $\varepsilon \to 0$.
Since it is related to the transverse oscillations in the waveguide,
we make this in the following way.
As already mentioned in the Introduction, let $u_0$ be the eigenfunction associated with the first eigenvalue $\lambda_0$
of the Dirichlet Laplacian $-\Delta_S^D$ in $S$, i.e.,
\[-\Delta_S^D u_0 = \lambda_0 u_0, \quad
u_0 \geq 0, \quad \int_S |u_0|^2 \dy=1, \quad \lambda_0 > 0.\]
Due to the geometrical characteristics of $S$, $\lambda_0$ is a simple eigenvalue.
We consider the quadratic form
\begin{eqnarray*}
t_\varepsilon^\theta(\varphi) - \frac{\lambda_0}{\varepsilon^2} \|\varphi\|^2_{\tilde{{\cal H}}_\varepsilon}
& = &
\int_Q
\frac{1}{\beta_\varepsilon} \left|\left(-i \partial_{sy}^R + \theta \right)\varphi \right|^2  \ds \dy \\
& + &
\int_Q \frac{\beta_\varepsilon}{\varepsilon^2} \left( |\nabla_y \varphi|^2 - \lambda_0 |\varphi|^2 \right) \ds \dy +
c \int_Q  \beta_\varepsilon |\varphi|^2 \ds \dy,
\end{eqnarray*}
$\varphi \in \dom T_\varepsilon^\theta$.
The subtraction of $(\lambda_0/\varepsilon^2) \int_Q \beta_\varepsilon |\varphi|^2 \ds \dy$ 
is intended to control the divergence
of the transverse oscillations, as $\varepsilon \to 0$
(see a detailed discussion in Section $1$ of \cite{crospec}).

An important point is that, for each $\varphi \in \dom T_\varepsilon^\theta$, 
\[\int_S \frac{\beta_\varepsilon}{\var^2}
\left( |\nabla_y \varphi|^2 - \lambda_0 |\varphi|^2 \right) \\dy \geq \gamma_\varepsilon(s)
\int_S |\varphi|^2 \dy,
\quad \hbox{a.e. s},\]
where $\gamma_\varepsilon(s) \to - k^2(s)/4$ uniformly, as $\varepsilon \to 0$.
The proof of this inequality can be found in \cite{bmt}.
As a consequence, since $\|k^2/4\|_\infty < c$, zero belongs to the resolvent set 
$\rho\left(T_\varepsilon^\theta - (\lambda_0/\varepsilon^2) \Id\right)$, for all $\varepsilon > 0$ small enough.

Now, define the unitary operator
\begin{equation}\label{unitaryappendixtheo}
\begin{array}{cccc}
{\cal V}_\varepsilon: & L^2(Q) & \to  & \tilde{{\cal H}}_\varepsilon \\
               &  \psi             & \to  &   \psi/ \beta_\varepsilon^{1/2}
\end{array}.
\end{equation}
With this transformation, we 
start 
to work in $L^2(Q)$ with the usual measure of $\mathbb R^3$.
Namely, consider the quadratic form
\[b_\varepsilon^\theta(\psi)  := 
t_\varepsilon^\theta({\cal V}_\varepsilon ^\theta\psi)
-\frac{\lambda_0}{\varepsilon^2} \| {\cal V}_\varepsilon ^\theta\psi \|^2_{\tilde{{\cal H}}_\varepsilon},\]
defined on the subspace
$\dom b_\varepsilon^\theta := {\cal V}_\varepsilon^{-1} (\dom T_\varepsilon^\theta) \subset L^2(Q)$.
One can show 
\begin{eqnarray*}
b_\varepsilon^\theta(\psi) &  =  & 
\int_Q \frac{1}{\beta^2_\varepsilon} 
\left|-i \left[\partial_{sy}^R \psi +  \beta_\varepsilon^{1/2} (\partial_{sy}^R \beta_\varepsilon^{-1/2})
\psi \right] + \theta \psi \right|^2 \ds \dy \\
& + &
\int_Q
\frac{1}{\varepsilon^2}\left(|\nabla_y \psi|^2
- \lambda_0 |\psi|^2 \right) \ds \dy -
\int_Q  \frac{k^2(s)}{4 \beta_\varepsilon^2} |\psi|^2 \ds \dy
+ c \int_Q  |\psi|^2 \ds \dy.
\end{eqnarray*}
The details of the calculations in this change of coordinates can be found 
in  Appendix A of \cite{oliveira04}.

Denote by $B_\varepsilon^\theta$ the self-adjoint operator
associated with the closure $\overline{b}_\varepsilon^\theta$ of the quadratic form $b_\varepsilon^\theta$.
Actually, $\dom B_\varepsilon^\theta \subset \dom \overline{b}_\varepsilon^\theta$ and
\[{\cal V}_\varepsilon^{-1} \left( T_\varepsilon^\theta - \frac{\lambda_0}{\varepsilon^2} \Id\right) 
{\cal V}_\varepsilon = B_\varepsilon^\theta.\]

By replacing the global multiplicative factor $\beta_\varepsilon$  by $1$ in the first and third
integral in the expression of $b_\varepsilon^\theta(\psi)$, 
we arrive now at the  quadratic form
\begin{eqnarray*}
d_\varepsilon^\theta (\psi) &  :=  &
\int_Q 
\left|-i \left[\partial_{sy}^R \psi +  \beta_\varepsilon^{1/2} (\partial_{sy}^R \beta_\varepsilon^{-1/2})
\psi \right]+ \theta \psi \right|^2 \ds \dy \\
& + &
\int_Q \frac{1}{\varepsilon^2}
\left(|\nabla_y \psi|^2 - \lambda_0 |\psi|^2 \right) \ds \dy -
\int_Q  \frac{k^2(s)}{4} |\psi|^2 \ds \dy
+ c \int_Q  |\psi|^2 \ds \dy ,
\end{eqnarray*}
$\dom d_\varepsilon^\theta = \dom b_\varepsilon^\theta$.
Again, denote by $D_\varepsilon^\theta$ the self-adjoint operator associated with
the closure $\overline{d}_\var^\theta$ of the quadratic form $d_\varepsilon^\theta$.
We have $\dom D_\varepsilon^\theta = \dom B_\varepsilon^\theta$ and
$0 \in \rho(B_\varepsilon^\theta) \cap \rho(D_\varepsilon^\theta)$, for all
$\varepsilon > 0$ small enough.

To simplify the calculations ahead, we have the following result.

\begin{Theorem}\label{technicaltheorem}
There exists a number $K > 0$, so that, for all $\varepsilon > 0$ small enough,
\[\sup_{\theta \in {\cal C}} \left\{\|(B_\varepsilon^\theta)^{-1} - (D_\varepsilon^\theta)^{-1}  \| \right\}
\leq K \, \varepsilon.\]
\end{Theorem}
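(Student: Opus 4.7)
The strategy is to compare $B_\varepsilon^\theta$ and $D_\varepsilon^\theta$ through their associated forms using the form version of the second resolvent identity. Since $\dom B_\varepsilon^\theta = \dom D_\varepsilon^\theta$ and both operators are strictly positive uniformly in $\theta$ and in $\varepsilon$ small (a consequence of the inequality involving $\gamma_\varepsilon(s) \to -k^2/4$ and the choice $c > \|k^2/4\|_\infty$, which makes every term in $\overline{b}_\varepsilon^\theta$ and $\overline{d}_\varepsilon^\theta$ dominate a positive multiple of $\|\cdot\|^2_{L^2(Q)}$), I would first establish the identity
\[
\langle \phi, ((B_\varepsilon^\theta)^{-1} - (D_\varepsilon^\theta)^{-1})\xi \rangle
= \overline{d}_\varepsilon^\theta\!\left((B_\varepsilon^\theta)^{-1}\phi, (D_\varepsilon^\theta)^{-1}\xi\right)
- \overline{b}_\varepsilon^\theta\!\left((B_\varepsilon^\theta)^{-1}\phi, (D_\varepsilon^\theta)^{-1}\xi\right),
\]
for all $\phi, \xi \in L^2(Q)$. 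Bounding the operator norm of the difference then reduces to controlling the difference of the two sesquilinear forms on specific elements.

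Next, comparing the explicit expressions for $b_\varepsilon^\theta$ and $d_\varepsilon^\theta$, the two forms differ only by the weight $1/\beta_\varepsilon^2$ (appearing in $b$) being replaced by $1$ (in $d$) in the first and third integrals. Polarizing and subtracting yields
\[
(d_\varepsilon^\theta - b_\varepsilon^\theta)(u,v)
= \int_Q \left(1 - \tfrac{1}{\beta_\varepsilon^2}\right) \mathcal{A}_\theta(u,v)\,\ds\dy
- \int_Q \tfrac{k^2(s)}{4}\left(1 - \tfrac{1}{\beta_\varepsilon^2}\right) u\,\overline{v}\,\ds\dy,
\]
where $\mathcal{A}_\theta(u,v)$ is the bilinear form whose associated quadratic form is the first integrand in $d_\varepsilon^\theta$. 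Since $\beta_\varepsilon = 1 - \varepsilon k(s)\langle z_\alpha, y\rangle$ and $k \in L^\infty$, one has $|1 - 1/\beta_\varepsilon^2| \leq K\varepsilon$ uniformly on $Q$. Cauchy--Schwarz then gives
\[
|(d_\varepsilon^\theta - b_\varepsilon^\theta)(u,v)| \leq K\varepsilon\!\left(\sqrt{\mathcal{A}_\theta(u,u)}\sqrt{\mathcal{A}_\theta(v,v)} + \|u\|_{L^2(Q)}\|v\|_{L^2(Q)}\right).
\]

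The final step is to show that the form quantities on the right are controlled uniformly in $\theta$. Since each of the remaining terms in $b_\varepsilon^\theta$ (namely the transverse Dirichlet piece and the bounded zero-order piece with $c - k^2/(4\beta_\varepsilon^2) \geq c - \|k^2/4\|_\infty - O(\varepsilon) > 0$) is non-negative for small $\varepsilon$, I can write $\int_Q \beta_\varepsilon^{-2}\,\mathcal{A}_\theta(u,u)\,\ds\dy \leq b_\varepsilon^\theta(u,u) + C\|u\|^2$, which, together with $1/\beta_\varepsilon^2 \geq 1/2$, yields $\mathcal{A}_\theta(u,u) \leq C'\, b_\varepsilon^\theta(u,u)$ uniformly in $\theta$ and $\varepsilon$ (small). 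Substituting $u = (B_\varepsilon^\theta)^{-1}\phi$ and using $b_\varepsilon^\theta((B_\varepsilon^\theta)^{-1}\phi, (B_\varepsilon^\theta)^{-1}\phi) = \langle \phi, (B_\varepsilon^\theta)^{-1}\phi\rangle \leq \|(B_\varepsilon^\theta)^{-1}\|\,\|\phi\|^2$, and similarly for $v = (D_\varepsilon^\theta)^{-1}\xi$, one deduces
\[
|\langle \phi, ((B_\varepsilon^\theta)^{-1} - (D_\varepsilon^\theta)^{-1})\xi\rangle| \leq K''\varepsilon\,\|\phi\|\,\|\xi\|,
\]
and taking suprema yields the claim.

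\textbf{Main obstacle.} The delicate point is ensuring that every constant in the coercivity estimates is uniform in $\theta \in \mathcal{C}$. The $\theta$-dependence enters only through the positive quadratic expression $\mathcal{A}_\theta(u,u) = \int_Q |{-i}[\partial_{sy}^R u + \beta_\varepsilon^{1/2}(\partial_{sy}^R \beta_\varepsilon^{-1/2})u] + \theta u|^2$, so it does not affect the lower bound on $b_\varepsilon^\theta$ by a positive multiple of $\|\cdot\|^2$ nor the bound $\mathcal{A}_\theta(u,u) \leq C' b_\varepsilon^\theta(u,u)$; the bounded Brillouin zone $\mathcal{C}$ also prevents any degeneration. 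Hence uniformity follows essentially by inspection, but must be carefully tracked.
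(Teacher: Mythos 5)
Your argument is correct and follows the route the paper itself indicates for this theorem: the paper omits the proof, pointing to Theorem 3.1 of \cite{verri01} and to the mechanism of Remark \ref{endremark} (a relative form bound $|b_\varepsilon^\theta-d_\varepsilon^\theta|\leq K\varepsilon\, d_\varepsilon^\theta$, driven by $\beta_\varepsilon\to 1$ uniformly, converted into a resolvent estimate), which is exactly what you carry out explicitly via the form resolvent identity with $\theta$-uniform constants. No substantive differences to report.
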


The main point in this theorem is  that $\beta_\varepsilon \to 1$ uniformily as $\varepsilon \to 0$.
Its proof is quite similar to the proof of Theorem $3.1$ in \cite{verri01} and will not be
presented here.

Consider the closed subspace ${\cal L} :=\{w(s)u_0(y): w \in L^2(0,L)\}$
of the Hilbert space $L^2(Q)$.
Take the 
orthogonal decomposition
\begin{equation}\label{decomporthog}
L^2(Q) = {\cal L} \oplus {\cal L}^\perp.
\end{equation}

For $\psi \in \dom D_\varepsilon^\theta$, we can write $\psi(s,y) = w(s) u_0(y) + \eta(s,y)$,
with $w \in H^2(0,L)$ and 
$\eta \in D_\varepsilon^\theta \cap {\cal L}^\perp$.
Furthermore, $w(0)=w(L)$.

Define
\begin{equation}\label{defC(S)}
C(S):= \int_S |\langle \nabla_y u_0, Ry \rangle|^2 \dy \geq 0.
\end{equation}
Note that $C(S)=0$ if, and only if, $S$ is radial.

Recall $V(s) = C(S) (\tau+\alpha')^2(s) + c - k^2(s)/4$
and  the one dimensional operator 
\[T^\theta w = (-i\partial_s + \theta)^2 w + V(s) w,\]
mentioned in the Introduction. Take 
$\dom T^\theta = \{w \in L^2(0,L): wu_0 \in \dom D_\varepsilon^\theta\} =
\{w \in H^2(0,L): w(0)=w(L), w'(0)=w'(L)\}$.
In this domain, $T^\theta$ is self-adjoint and, since $\|k^2/4\|_\infty < c$, 
$0 \in \rho(T^\theta)$.

Denote by $t^\theta(w)$ the  quadratic form associated with $T^\theta$. 
For $w \in \dom T^\theta$, 
\[t^\theta(w) = 
\int_0^L \Big{[} |(-i \partial_s + \theta ) w|^2 
+ V(s) |w|^2 \Big{]} \ds.\]

\vspace{0.3cm}
\noindent
{\bf Proof of Theorem \ref{reductionofdimension}:}
The proof  is separated in two steps.

\vspace{0.3cm}
\noindent
{\bf Step I.}
Define the one dimensional quadratic form
\[s_\varepsilon^\theta(w) :=  d_\varepsilon^\theta(wu_0)
=
\int_0^L \Big{[} |(-i \partial_s + \theta ) w|^2 
+ \left( W(s) + c + g_\varepsilon(s) \right) |w|^2 \Big{]} \ds,\]
$\dom s_\varepsilon^\theta = \dom T^\theta$,
where
\[g_\varepsilon(s) = \int_S   \left\{
\beta_\varepsilon (\partial_{sy}^R \beta_\varepsilon^{-1/2})^2
- \left[ \beta_\varepsilon^{1/2} (\partial_{sy}^R \beta_\varepsilon^{-1/2}) \right]' \right\} |u_0|^2 \dy
\in L^\infty(0,L).
\]

Actually, $s_\varepsilon^\theta$ is the restriction of $d_\varepsilon^\theta$ on the
subspace $\dom T^\theta = \dom D_\varepsilon^\theta \cap {\cal L}$.

Denote by $S_\varepsilon^\theta$ the self-adjoint operator associated 
with the closure $\overline{s}_\var^\theta$ of the quadratic form $s_\var^\theta$.
We have $\dom S_\varepsilon^\theta = \dom T^\theta \subset \dom \overline{s}_\varepsilon^\theta$.

Recall the definition of $\beta_\varepsilon$ by (\ref{defbetarestric}) in Section
\ref{changecoordinates}.
Some calculations show that 
\begin{equation}\label{convunigrest}
|g_\varepsilon (s)| \leq K \, \varepsilon, \quad \forall s \in (0,L),
\end{equation}
for some $K > 0$.
This fact and  the condition $\|k^2/4\|_\infty < c$ imply
$0 \in \rho(S^\theta_\varepsilon)$, for all $\varepsilon > 0$ small enough.

Let ${\bf 0}$ be the null operator on the subspace ${\cal L}^\perp$.
In this step, we are going to show that there exists  $K > 0$, so that, for all $\varepsilon > 0$ small enough,
\begin{equation}\label{firststepproof}
\sup_{\theta \in {\cal C}} \left\{ \|(D_\varepsilon^\theta)^{-1} - ((S_\varepsilon^\theta)^{-1} \oplus {\bf 0}) \| \right\}
\leq K \, \varepsilon.
\end{equation}

Due to the decomposition  (\ref{decomporthog}), for $\psi \in \dom D_\varepsilon^\theta$,
\[
\psi(s,y) = w(s) \, u_0(y) + \eta(s,y), \quad w \in \dom T^\theta, \quad \eta \in 
\dom D_\varepsilon^\theta \cap {\mathcal L}^\perp.
\]
Thus, $d_\varepsilon^\theta(\psi)$ can be rewritten 
as
\[
d_\varepsilon^\theta(\psi) = s_\varepsilon^\theta(w) +  d_\varepsilon^\theta(wu_0, \eta) 
+ d_\varepsilon^\theta(\eta, wu_0) + d_\varepsilon^\theta(\eta).
\]

We need to check that there are $c_0> 0$ and functions $0 \leq q(\varepsilon), 0 \leq p(\varepsilon)$
and $c(\varepsilon)$ 
so that $s_\varepsilon^\theta (w)$, $d_\varepsilon^\theta (\eta)$ and 
$d_\varepsilon^\theta (w, \eta)$
satisfy the following
conditions:
\begin{equation}\label{firstcondsol}
s_\varepsilon^\theta ( w ) \geq c(\varepsilon) \|wu_0\|_{\LL^2(Q)}^2, \quad \forall w \in \dom T^\theta, \quad c(\varepsilon) \geq c_0 > 0;
\end{equation}
\begin{equation}\label{secondcondsol}
d_\varepsilon^\theta (\eta) \geq p(\varepsilon) \| \eta\|_{\LL^2(Q)}^2, \quad 
\forall \eta \in \dom D_\varepsilon^\theta \cap {\mathcal L}^{\perp};
\end{equation}
\begin{equation}\label{thirdcondsol}
|d_\varepsilon^\theta ( w, \eta)|^2 \leq 
q(\varepsilon)^2\, s_\varepsilon^\theta ( w )\, d_\varepsilon^\theta (\eta), \quad
\forall \psi \in \dom D_\varepsilon^\theta;
\end{equation}
and with
\begin{equation}\label{soly}
p(\varepsilon) \to \infty, \quad c(\varepsilon)=O(p(\varepsilon)), \quad q(\varepsilon) \to 0 \quad {\rm as} \quad \varepsilon \to 0.
\end{equation}
Thus, Proposition~3.1 in~\cite{solomyak} guarantees that, for~$\varepsilon > 0$ small enough,
\[\sup_{\theta \in {\cal C}} \left\{
\| (D_\varepsilon^\theta )^{-1} - ((S_\varepsilon^\theta)^{-1} \oplus {\bf 0} ) \| \right\}
\leq p(\varepsilon)^{-1} + K \, q(\varepsilon) \, c(\varepsilon)^{-1},
\]
for some $K > 0$.
We highlight that the main point in this proof
is to get functions $c(\varepsilon), p(\varepsilon)$ and $q(\varepsilon)$ that do not depend on $\theta$.

Since $\|k^2/4\|_\infty < c$ and $g_\varepsilon(s) \to 0$ uniformly, 
there exists $c_1 > 0$, so that,
\[s_\varepsilon^\theta(w) \geq c_1 \int_0^L |w|^2 \ds = c_1 \|wu_0\|_{L^2(Q)},
\quad \forall w \in \dom T^\theta,\]
for all $\varepsilon > 0$ small enough.
We pick up $c(\varepsilon):=c_1$.

Let $\lambda_1 > \lambda_0$ the second eigenvalue of the Dirichlet Laplacian operator in $S$. The Min-Max
Principle ensures that
\[\int_S \left(|\nabla_y \eta|^2 - \lambda_0 |\eta|^2   \right) \dy \geq
(\lambda_1-\lambda_0) \int_S |\eta|^2  \dy,
\quad \hbox{a.e. s}, \quad \forall \eta \in \dom D_\varepsilon^\theta \cap {\cal L}^\perp.\]
Thus,
\[d_\varepsilon^\theta(\eta) \geq \frac{(\lambda_1-\lambda_0)}{\varepsilon^2} \int_Q |\eta|^2 \ds \dy,
\quad \forall \eta \in \dom D_\varepsilon^\theta \cap {\cal L}^\perp.\]
Just to take $p(\varepsilon):= (\lambda_1-\lambda_0)/\varepsilon^2$.

The proof of inequality (\ref{thirdcondsol})  is very similar to that in Appendix B in \cite{oliveira04}.
Again, it will be omitted here. One can show 
\[|d_\varepsilon^\theta ( w, \eta)|^2 \leq 
K\, \varepsilon^2\, s_\varepsilon^\theta ( w )\, d_\varepsilon^\theta (\eta), \quad
\forall \psi \in \dom D_\varepsilon^\theta,\]
for some $K> 0$.
Take $q(\varepsilon) := \sqrt{K} \, \varepsilon$.
Since the conditions
(\ref{firstcondsol}), (\ref{secondcondsol}), (\ref{thirdcondsol})
and 
(\ref{soly}) are satisfied,   (\ref{firststepproof}) holds true.

\vspace{0.3cm}
\noindent
{\bf Step II.}
By (\ref{convunigrest}), for all $\varepsilon > 0$ small enough,
\[|s_\varepsilon^\theta(w) - t^\theta(w)| \leq \|g_\varepsilon\|_\infty \int_0^L |w|^2 \ds \leq
K \, \varepsilon \int_0^L |w|^2 \ds,\quad \forall  w \in \dom T^\theta, \forall \theta \in {\cal C}.\]

By Theorem 3 in \cite{oliveira01}, for all $\varepsilon > 0$ small enough,
\[\sup_{\theta \in {\cal C}} \left\{ \|(S_\varepsilon^\theta)^{-1} - (T^\theta)^{-1} \|\right\} \leq K \, \varepsilon.\]

Taking into account Theorem \ref{technicaltheorem} and the Steps I and II,
we conclude the proof of Theorem \ref{reductionofdimension}.

\begin{Remark}\label{endremark}
{\rm
Let $(h_\varepsilon)_\varepsilon$, $(m_\varepsilon)_\varepsilon$ be two sequences of  positive and closed 
sesquilinear forms in the Hilbert space ${\cal H}$ with
$\dom h_\varepsilon = \dom m_\varepsilon = {\cal D}$, for all $\varepsilon > 0$.
Denote by $H_\varepsilon$ and $M_\varepsilon$ the self-adjoint operators associated with 
$h_\varepsilon$ and $m_\varepsilon$, respectively. Suppose that there exists $\zeta > 0$, so that,
$h_\varepsilon, m_\varepsilon > \zeta$, for all $\varepsilon > 0$, and
\begin{equation}\label{remarkreduction}
|h_\varepsilon(\varphi)-m_\varepsilon(\varphi)| \leq j(\varepsilon) \, m_\varepsilon(\varphi),
\quad \forall \varphi \in {\cal D},
\end{equation}
with $j(\varepsilon) \to 0$, as $\varepsilon \to 0$.
Theorem 3 in \cite{oliveira01} implies that there exists a number $K>0$, so that,
for all $\varepsilon > 0$ small enough,
\begin{equation}\label{concthedomoper}
\|H_\varepsilon^{-1} - M_\varepsilon^{-1} \| \leq K \, j(\varepsilon).
\end{equation}

Suppose that $\dom H_\varepsilon = \dom M_\varepsilon =: \tilde{\cal D}$ and that the condition 
(\ref{remarkreduction}) is satisfied for all $\varphi \in \tilde{{\cal D}}$.
By applying the 
same proof of \cite{oliveira01}, the inequality (\ref{concthedomoper}) holds true.

The same idea can be applied in Proposition~3.1 in~\cite{solomyak}. 
Because of this, in this section, when working with quadratic forms we have  
restricted the study to their actions in the domains of their respective
associated self-adjoint  operators.

}
\end{Remark}

\noindent
{\bf Proof of Corollary \ref{asymptoticbehaviour}:}
Denote by $\lambda_n(\varepsilon, \theta) := E_n(\varepsilon, \theta) - (\lambda_0/\varepsilon^2)$.
Theorem \ref{reductionofdimension} 
in the Introduction and Corollary 2.3 of \cite{gohberg}
imply
\begin{equation}\label{aproxinverse}
\left| \frac{1}{\lambda_n(\varepsilon, \theta)} - \frac{1}{\kappa_n(\theta)} \right| \leq K \, \varepsilon,
\quad \forall n \in \mathbb N, \, \forall \theta \in {\cal C},
\end{equation}
for all $\varepsilon > 0$ small enough.
Then,
\[\left|\lambda_n(\varepsilon, \theta) - k_n(\theta)\right| \leq 
K \, \varepsilon \, |\lambda_n(\varepsilon, \theta)| \, |k_n(\theta)|, \quad \forall n \in \mathbb N, \, \forall \theta \in {\cal C},\]
for all $\varepsilon > 0$ small enough.

A proof similar to that of Lemma \ref{analyticfamily} shows that
$\{T^\theta: \theta \in {\cal C}\}$ is a type $A$ analytic family. 
Thus,  the functions $k_n(\theta)$ are continuous in ${\cal C}$ and consequently bounded. 
This fact and the inequality (\ref{aproxinverse}) ensure that, for each $\tilde{n}_0 \in \mathbb N$, there exists
$K_{\tilde{n}_0}> 0$, so that,
\[|\lambda_{\tilde{n}_0}(\varepsilon, \theta) | \leq K_{\tilde{n}_0}, \quad \forall \theta \in {\cal C},\]
for all $\varepsilon > 0$ small enough.

Finally, for each $n_0 \in \mathbb N$, there exists $K_{n_0} > 0$ so that
\[\left|\lambda_n(\varepsilon, \theta) - k_n(\theta)\right| \leq 
K_{n_0} \, \varepsilon, \quad n=1, 2 \cdots, n_0, \forall \theta \in {\cal C},\]
for all $\varepsilon > 0$ small enough.

\section{Existence of band gaps; proof of Theorem \ref{corgap}}\label{sectionbandgaps}

Again, recall $V(s) = C(S) (\tau+\alpha')^2(s) + c - k^2(s)/4$ and consider the one dimensional operator 
\[Tw = - w'' + V(s) w , \quad \dom T = H^2(\mathbb R).\]

We have denoted by $\kappa_n(\theta)$ the $n$th eigenvalue (counted with multiplicity) of the operator $T^\theta$.
Each $\kappa_n(\theta)$ is a continuous function in ${\cal C}$.
By Chapter XIII.16 in \cite{reed}, we have the following properties:

\vspace{0.3cm}
\noindent
(a) $\kappa_n(\theta) = \kappa_n(-\theta)$, for all $\theta \in {\cal C}$, $n=1,2,3, \cdots$.

\vspace{0.3cm}
\noindent
(b) For $n$ odd (resp. even), $\kappa_n(\theta)$ is strictly  monotone increasing (resp. decreasing) as $\theta$
increases from $0$ to $\pi/L$. In particular,
\[\kappa_1(0) < \kappa_1(\pi/L) \leq \kappa_2(\pi/L) < \kappa_2(0) \leq \cdots
\leq \kappa_{2n-1}(0) < \kappa_{2n-1}(\pi/L)  \]
\[ \leq \kappa_{2n}(\pi/L) < \kappa_{2n}(0) \leq \cdots.\]

For each $n=1,2,3,\cdots$, define
\[ B_n := \left\{
\begin{array}{cc}
\left[ \kappa_n(0), \kappa_n(\pi/L) \right],  & \hbox{for} \, \, \,  n \,\,\, \hbox{odd}, \\
\left[ \kappa_n(\pi/L), \kappa_n(0) \right],  & \hbox{for} \, \, \, n \,\, \, \hbox{even},
\end{array}\right.
\]
and
\[ G_n := \left\{
\begin{array}{l}
\left(\kappa_n(\pi/L), \kappa_{n+1}(\pi/L) \right),  \,\,\,
\hbox{for} \, \, \,  n \,\,\, \hbox{odd so that} \,\,\, \kappa_n(\pi/L) \neq \kappa_{n+1}(\pi/L), \\
\left( \kappa_n(0), \kappa_{n+1}(0) \right),  \,\,\,
\hbox{for} \, \, \, n \,\, \, \hbox{even so that} \,\,\, \kappa_n(0) \neq \kappa_{n+1}(0), \\
\emptyset,  \,\,\,  \hbox{otherwise}. 
\end{array}
\right.
\]

By Theorem XIII.90 in \cite{reed}, one has $\sigma(T) = \cup_{n=1}^{\infty} B_n$ where
$B_n$ is called the $j$th band of $\sigma(T)$, and
$G_n$ the gap of $\sigma(T)$ if $B_n\neq \emptyset$.

Corollary \ref{asymptoticbehaviour} implies that for each $n_0 \in \mathbb N$, there exists $\varepsilon_{n_0} > 0$ so that,
for all $\varepsilon \in (0, \varepsilon_{n_0})$,
\[\max_{\theta \in {\cal C}} E_n (\varepsilon, \theta) =
\left\{
\begin{array}{l}
\lambda_0/\varepsilon^2 + \kappa_n(\pi/L) + O(\varepsilon), \,\,\, \hbox{for} \,\,\, n \,\,\, \hbox{odd}, \\
\lambda_0/\varepsilon^2 + \kappa_n(0) + O(\varepsilon), \,\,\, \hbox{for} \,\,\, n \,\,\, \hbox{even},
\end{array}
\right.
\]
and
\[\min_{\theta \in {\cal C}} E_n (\varepsilon, \theta) =
\left\{
\begin{array}{l}
\lambda_0/\varepsilon^2 + \kappa_n(0) + O(\varepsilon), \,\,\, \hbox{for} \,\,\, n \,\,\, \hbox{odd}, \\
\lambda_0/\varepsilon^2 +  \kappa_n(\pi/L) + O(\varepsilon),\,\,\, \hbox{for} \,\,\, n \,\,\, \hbox{even},
\end{array}
\right.
\]
hold for each $n=1,2,\cdots,n_0$.
Thus, we have

\begin{Corollary}\label{corcorrected}
For each $n_0 \in \mathbb N$, there exists $\varepsilon_{n_0+1} > 0$ so that,
for all $\varepsilon \in (0, \varepsilon_{n_0+1})$, 
\[\min_{\theta \in {\cal C}} E_{n+1} (\varepsilon, \theta) 
-
\max_{\theta \in {\cal C}} E_n (\varepsilon, \theta) =
|G_n| + O(\varepsilon),\]
holds for each $n=1,2, \cdots, n_0$,
where $| \cdot |$ is the Lebesgue measure.
\end{Corollary}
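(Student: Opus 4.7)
The plan is to combine the two displayed formulas just above the corollary, which give asymptotic expressions for $\max_{\theta \in {\cal C}} E_n(\varepsilon,\theta)$ and $\min_{\theta \in {\cal C}} E_n(\varepsilon,\theta)$, with the explicit description of the gaps $G_n$ in terms of the values $\kappa_n(0)$ and $\kappa_n(\pi/L)$. Since the corollary asserts the estimate only for finitely many indices $n \in \{1,\dots,n_0\}$, we can take the minimum of the relevant thresholds from Corollary \ref{asymptoticbehaviour} to obtain a single $\varepsilon_{n_0+1}>0$ that works uniformly.

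First, I would split into the two parity cases. If $n \leq n_0$ is odd, then $n+1$ is even, so the displayed formulas before the statement give
\[
\max_{\theta \in {\cal C}} E_n(\varepsilon,\theta) = \frac{\lambda_0}{\varepsilon^2} + \kappa_n(\pi/L) + O(\varepsilon),
\qquad
\min_{\theta \in {\cal C}} E_{n+1}(\varepsilon,\theta) = \frac{\lambda_0}{\varepsilon^2} + \kappa_{n+1}(\pi/L) + O(\varepsilon).
\]
Subtracting, the divergent terms cancel and we are left with
\[
\min_{\theta \in {\cal C}} E_{n+1}(\varepsilon,\theta) - \max_{\theta \in {\cal C}} E_n(\varepsilon,\theta)
= \bigl(\kappa_{n+1}(\pi/L) - \kappa_n(\pi/L)\bigr) + O(\varepsilon).
\]
By property (b) above, $\kappa_n(\pi/L) \leq \kappa_{n+1}(\pi/L)$ for $n$ odd, and by the definition of $G_n$ one has $|G_n| = \kappa_{n+1}(\pi/L) - \kappa_n(\pi/L)$ (this equals $0$ precisely when $G_n = \emptyset$). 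Hence the right-hand side is $|G_n| + O(\varepsilon)$, as claimed.

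The case $n$ even is completely analogous: one substitutes $0$ in place of $\pi/L$ in the two displayed formulas, obtaining
\[
\min_{\theta \in {\cal C}} E_{n+1}(\varepsilon,\theta) - \max_{\theta \in {\cal C}} E_n(\varepsilon,\theta)
= \bigl(\kappa_{n+1}(0) - \kappa_n(0)\bigr) + O(\varepsilon) = |G_n| + O(\varepsilon),
\]
again using property (b) and the definition of $G_n$ for even $n$.

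There is essentially no genuine obstacle here; the result is a direct bookkeeping consequence of Corollary \ref{asymptoticbehaviour} once one observes, via the monotonicity property (b), that the location of the extrema of each band function $\theta \mapsto \kappa_n(\theta)$ on ${\cal C}$ is either $0$ or $\pi/L$ according to the parity of $n$. The only mild subtlety worth writing out is to check that the $O(\varepsilon)$ bound is uniform in $n=1,\dots,n_0$, which follows because the $O(\varepsilon)$ constants in Corollary \ref{asymptoticbehaviour} depend only on $n_0$, and hence so does $\varepsilon_{n_0+1}$.
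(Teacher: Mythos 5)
Your proposal is correct and follows exactly the route the paper takes: the paper derives the two displayed max/min formulas from Corollary \ref{asymptoticbehaviour} together with the parity-dependent monotonicity of $\kappa_n(\theta)$, and then states Corollary \ref{corcorrected} as an immediate subtraction, which is precisely your parity case analysis (including the correct observation that $|G_n|=0$ when the relevant endpoint values coincide). Your remark on the uniformity of the $O(\varepsilon)$ constant over $n=1,\dots,n_0$ is the only detail the paper leaves implicit, and you handle it correctly.
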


Another important tool to prove Theorem \ref{corgap} is 
the following result due to Borg \cite{borg}.

\begin{Theorem} (Borg)
Suppose that $W$ is a real-valued, piecewise continuous
function on $[0,L]$. Let $\lambda_n^{\pm}$ be the $n$th eigenvalue of the following operator
counted with multiplicity respectively
\[-\frac{d^2}{ds^2} + W(s), \quad \hbox{in} \quad 
L^2(0,L),\]
with domain
\begin{equation}\label{domainborg}
\{w \in H^2(0,L); w(0)=\pm w(L),
w'(0)=\pm w'(L)\}.
\end{equation}

We suppose that
\[\lambda_n^+ = \lambda_{n+1}^+, \quad
\hbox{for all even} \,\, n,\]
and
\[\lambda_n^- = \lambda_{n+1}^-, \quad
\hbox{for all odd} \,\, n.\]
Then,
$W$ is constant on $[0, L]$.
\end{Theorem}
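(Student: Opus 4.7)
The plan is to invoke Floquet/Hill theory, exploiting the rigidity that arises when every spectral gap above the ground state collapses. Let $c(s, \lambda)$ and $\vartheta(s, \lambda)$ denote the fundamental solutions of $-w'' + W(s)w = \lambda w$ normalized by $c(0) = \vartheta'(0) = 1$, $c'(0) = \vartheta(0) = 0$, and define the Hill discriminant $\Delta(\lambda) := c(L, \lambda) + \vartheta'(L, \lambda)$, which is an entire function of $\lambda$ of order $1/2$. Standard Floquet theory identifies the periodic eigenvalues $\{\lambda_n^+\}$ (counted with multiplicity) with the zeros of $\Delta(\lambda) - 2$ and the antiperiodic eigenvalues $\{\lambda_n^-\}$ with the zeros of $\Delta(\lambda) + 2$.

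First I would translate the hypothesis into this language: the equalities $\lambda_n^+ = \lambda_{n+1}^+$ for even $n$ and $\lambda_n^- = \lambda_{n+1}^-$ for odd $n$ say that every band edge $\lambda_n^\pm$ with $n \geq 2$ is a double zero of $\Delta \mp 2$, while $\lambda_1^+$ is the only simple zero among them. Since two entire functions of order $1/2$ with the same zeros (with multiplicities) agree up to a multiplicative constant, which is fixed by the known asymptotics of $\Delta$ as $\lambda \to -\infty$, Hadamard factorization forces $\Delta(\lambda) \equiv 2\cos(L\sqrt{\lambda - W_0})$, where $W_0 := \lambda_1^+$; this is precisely the discriminant of the constant potential $W_0$.

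To pass from the discriminant back to the potential I would bring in the Dirichlet eigenvalues $\{\mu_n\}$ on $[0, L]$, i.e., the zeros of $\vartheta(L, \lambda)$. A classical bracketing places each $\mu_n$ inside the closed $n$-th gap between consecutive band edges, so when every such gap collapses, each $\mu_n$ is pinned to the common band edge, matching exactly the Dirichlet spectrum of the constant potential $W_0$. Combined with the equality of discriminants, this pins down the full monodromy matrix, hence the Weyl $m$-function at $s = 0$, to coincide with that of the constant-potential case, and the Borg--Marchenko uniqueness theorem for Hill operators then delivers $W \equiv W_0$.

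The main obstacle is this final step: the Hill discriminant alone does not determine $W$, so the work lies in showing that the collapse of all gaps supplies the missing auxiliary spectral data (the Dirichlet eigenvalues together with the sign indices of $\vartheta'(L, \mu_n)$) in exactly the form needed to invoke inverse spectral uniqueness. The factorization step and the interpretation of the hypothesis are direct; the inverse-spectral bookkeeping is where the substantive effort is concentrated.
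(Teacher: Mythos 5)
The paper does not prove this statement at all: it is quoted as Borg's classical 1946 theorem with a citation to \cite{borg}, so there is no internal proof to compare against and your sketch has to stand on its own. Judged that way, your overall strategy (collapse of all gaps $\Rightarrow$ rigidity of the Hill discriminant $\Rightarrow$ inverse spectral uniqueness) is a legitimate modern route, but the pivotal step is not justified as written. You claim that Hadamard factorization ``forces $\Delta(\lambda)\equiv 2\cos(L\sqrt{\lambda-W_0})$'' because two entire functions of order $1/2$ with the same zeros agree up to a constant. The hypothesis, however, only gives you the \emph{multiplicity pattern} of the zeros of $\Delta\mp2$ (all double except the lowest periodic eigenvalue), not their \emph{locations}; you do not know a priori that the degenerate band edges sit at $W_0+(n\pi/L)^2$, which is exactly what you would need to compare zero sets with the constant-potential discriminant. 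As stated the step is circular. The standard ways to close this hole are: (i) Hochstadt's argument, which uses that $\Delta'$ is entire of order $1/2$ with exactly one zero per gap, hence (when all gaps close) proportional to the entire function $\psi$ defined by $\Delta^2-4=(\lambda_1^+-\lambda)\psi^2$, yielding the ODE $(\lambda_1^+-\lambda)(\Delta')^2=\tfrac{L^2}{4}(\Delta^2-4)$ whose solution with the correct $\lambda\to-\infty$ asymptotics is $2\cos\bigl(L\sqrt{\lambda-\lambda_1^+}\bigr)$; or (ii) the Marchenko--Ostrovskii comb-mapping argument. Without one of these, the rest of your chain (pinning the Dirichlet eigenvalues, then Borg--Marchenko) has nothing to hang on.

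You should also be aware that the whole inverse-spectral superstructure can be bypassed: once each Dirichlet eigenvalue $\mu_n(x)$ of the problem on $[x,x+L]$ is trapped in the closed $n$-th gap, the trace formula
\[
W(x)=\lambda_1^{+}+\sum_{n\ge1}\bigl(\lambda_{2n}+\lambda_{2n+1}-2\mu_n(x)\bigr)
\]
(with $\lambda_{2n},\lambda_{2n+1}$ the two edges of the $n$-th gap) immediately gives $W(x)=\lambda_1^{+}$ for all $x$, since every summand vanishes. This is shorter and avoids both the discriminant identification and the Borg--Marchenko machinery, at the cost of having to justify the trace formula itself. Either way, the version you wrote has a genuine gap at the factorization step that must be repaired before the argument is complete.
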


\noindent
{\bf Proof of Theorem \ref{corgap}:}
For each $\theta \in {\cal C}$, we define the unitary transformation 
$(u_\theta w)(s) = e^{-i\theta s} w(s)$. In particular, consider the
operators  $\tilde{T}^{0}:= u_{0} T^{0} u_{0}^{-1}$
and $\tilde{T}^{\pi/L} := u_{\pi/L} T^{\pi/L} u_{\pi/L}^{-1}$
whose eigenvalues are given by 
$\{\nu_n(0)\}_{n\in \mathbb N}$ and $\{\nu_n(\pi/L)\}_{n\in \mathbb N}$, respectively. Furthermore, the domains of these operators
are given by (\ref{domainborg});
$\tilde{T}^0$ (resp. $\tilde{T}^{\pi/L}$) is called operator with periodic (resp. antiperiodic) boundary conditions.

Since $V(s)$ is  not constant in $[0,L]$, by Borg's Theorem, without loss of generality,
we can affirm that there exists $n_1 \in \mathbb N$ so that
$\nu_{n_1}(0) \neq \nu_{n_1+1}(0)$.
Now, the result follows by  Corollary  \ref{corcorrected}.

\section{Location of band gaps; proof of Theorem \ref{locationgapsgaps}}\label{locgaps}

The proof of Theorem \ref{locationgapsgaps} is very similar 
to the proof of Theorem 1.3 in \cite{Yoshi}. Due to this reason, we present only some steps.
A more complete proof can be found in that work.

We begin with some technical details.
Let $W \in L^2(0,L)$ be a real function.
For $\mu \in \mathbb C$, consider the operators
\[T^+ w = - w''+ \mu \, W(s) w \quad \hbox{and} \quad 
T^- w = - w'' + \mu \, W(s) w,\]
with domains given by
\begin{eqnarray*}
\dom T^+ & =  & \{ w \in H^2(0,L): w(0)=w(L), w'(0)=w'(L)\},\\
\dom T^- & =  & \{ w \in H^2(0,L): w(0)=-w(L), w'(0)=-w'(L)\},
\end{eqnarray*}
respectively.

Denote by $\{l_n^+(\mu)\}_{n \in \mathbb N}$ and
$\{l_n^-(\mu)\}_{n \in \mathbb N}$ the eigenvalues of $T^+$ and $T^-$, respectively.
For $\mu \in \mathbb R$ and $n \in \mathbb N$, define
\[\delta_n^+(\mu) := l_{2n+1}^+(\mu)-l_{2n}^+(\mu) \quad
\hbox{and}  \quad
\delta_n^-(\mu) := l_{2n}^-(\mu)-l_{2n-1}^-(\mu).\]
Now, 
\[\delta_{2n-1}(\mu) := \delta_n^-(\mu) \quad
\hbox{and} \quad
\delta_{2n}(\mu) := \delta_n^+(\mu).\]

Let $\{\omega_m\}_{n=-\infty}^{n=+\infty}$ be the Fourier coefficients of $W(s)$.
More precisely, one can write
\[W(s) = \sum_{n=-\infty}^{+\infty} \frac{1}{\sqrt{L}}   \omega_n e^{2 n \pi i s /L} \quad
\hbox{in }\, L^2(0,L).\]
Since $W(s)$ is a real function, we have $\omega_n = \overline{\omega_{-n}}$, for all
$n \in \mathbb Z$.

The goal is to find an asymptotic behavior for $\delta_n(\mu)$, as
$\mu \to 0$, in terms of the Fourier coefficients of
$W(s)$. 

\begin{Theorem}\label{theoapp}
For each $n \in \mathbb N$,
\[\delta_n(\mu) = \frac{2}{\sqrt{L}} |\omega_n| |\mu| + O(|\mu|^2),
\quad \mu \to 0, \, \mu \in \mathbb R.\]
\end{Theorem}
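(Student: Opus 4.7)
The plan is to apply degenerate analytic perturbation theory at $\mu=0$. At $\mu=0$, $T^+$ and $T^-$ reduce to $-d^2/ds^2$ with periodic, respectively antiperiodic, boundary conditions, whose spectra are explicit: $T^+$ has eigenvalues $(2\pi m/L)^2$, $m\in\N\cup\{0\}$, with two-dimensional eigenspaces for $m\geq 1$ spanned by $e_{\pm m}^+(s):=e^{\pm 2\pi i m s/L}/\sqrt{L}$; $T^-$ has eigenvalues $((2m-1)\pi/L)^2$, $m\geq 1$, with two-dimensional eigenspaces spanned by $e_{\pm m}^-(s):=e^{\pm(2m-1)\pi i s/L}/\sqrt{L}$. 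Since $W\in L^2(0,L)$, multiplication by $\mu W$ is form-bounded relative to $-d^2/ds^2$ with arbitrarily small relative bound for $|\mu|$ small, so $\{T^\pm(\mu)\}_{\mu\in\C}$ is a Kato analytic family (of type (B)) in a complex neighborhood of $\mu=0$. Standard Rellich theory then yields analytic branches of eigenvalues through each unperturbed level, whose linear term in $\mu$ is an eigenvalue of the restriction of $W$ to the corresponding unperturbed eigenspace.

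Next I compute these $2\times 2$ matrices. At the $n$-th degenerate periodic level $(2\pi n/L)^2$, in the basis $\{e_n^+, e_{-n}^+\}$, both diagonal entries equal $\la e_n^+, W e_n^+\ra=\omega_0/\sqrt{L}$, and the off-diagonal entry is
\[\la e_n^+, W e_{-n}^+\ra=\frac{1}{L}\int_0^L W(s)\, e^{-4\pi i n s/L}\,\ds=\frac{\omega_{2n}}{\sqrt{L}}.\]
The eigenvalues of this Hermitian matrix are $\omega_0/\sqrt{L}\pm|\omega_{2n}|/\sqrt{L}$, so the two analytic branches are
\[\frac{(2\pi n)^2}{L^2}+\frac{\omega_0}{\sqrt{L}}\mu\pm\frac{|\omega_{2n}|}{\sqrt{L}}\mu+O(\mu^2).\]
Ordering them to obtain $l_{2n}^+(\mu)\leq l_{2n+1}^+(\mu)$ converts the signed splitting into $|\mu|$, giving $\delta_n^+(\mu)=(2/\sqrt{L})|\omega_{2n}||\mu|+O(\mu^2)$, hence $\delta_{2n}(\mu)=(2/\sqrt{L})|\omega_{2n}||\mu|+O(\mu^2)$. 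The antiperiodic case is entirely analogous: at $((2n-1)\pi/L)^2$, in $\{e_n^-, e_{-n}^-\}$, the off-diagonal entry is $\omega_{2n-1}/\sqrt{L}$, producing $\delta_{2n-1}(\mu)=(2/\sqrt{L})|\omega_{2n-1}||\mu|+O(\mu^2)$. Concatenating the even and odd cases yields the stated formula $\delta_n(\mu)=(2/\sqrt{L})|\omega_n||\mu|+O(\mu^2)$.

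The main obstacle is the index bookkeeping: the coupling between wavenumbers $\pm 2\pi n/L$ through $W$ produces $e^{-4\pi i n s/L}$ and thus the Fourier coefficient $\omega_{2n}$, not $\omega_n$, but this is exactly compensated by the relabeling $\delta_{2n}:=\delta_n^+$ and $\delta_{2n-1}:=\delta_n^-$, so each $\delta_n$ pairs with $\omega_n$. The passage from the signed splitting $\pm\mu$ to $|\mu|$ is immediate from the symmetry of the linear correction about the unperturbed level, and the $O(\mu^2)$ remainder is uniform on a compact complex neighborhood of $\mu=0$ by the usual estimates for analytic families at an isolated finite-multiplicity eigenvalue.
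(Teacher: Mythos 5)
Your proof is correct and follows exactly the route the paper relies on: the paper defers to Yoshitomi's proof, whose main tool is precisely the Kato--Rellich analytic perturbation theorem applied to the degenerate periodic and antiperiodic levels of $-d^2/ds^2$. Your explicit computation of the $2\times 2$ reduced matrices (off-diagonal entries $\omega_{2n}/\sqrt{L}$ and $\omega_{2n-1}/\sqrt{L}$) and the index bookkeeping matching $\delta_{2n}=\delta_n^+$ with $\omega_{2n}$ and $\delta_{2n-1}=\delta_n^-$ with $\omega_{2n-1}$ are accurate.
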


A detailed proof of Theorem \ref{theoapp} can be find in \cite{Yoshi};
the main tool used by the author in the proof
is the analytic perturbation theorem due to Kato and Rellich (see \cite{kato}; Chapter VII and Theorem 2.6
in Chapter VIII).

Recall the definition of $T^\theta$ and  $E_n(\gamma,\varepsilon, \theta)$ in the Introduction.
For each $\theta \in {\cal C}$, define
\[T_\gamma^\theta w := -w'' + \gamma^2 \, V(s) w, \quad \dom T_\gamma^\theta=\dom T^\theta.\]
Denote by $\kappa_n(\gamma, \theta)$ the $n$th eigenvalue of $T_\gamma^\theta$ counted with
multiplicity. 
As in Section \ref{sectionbandgaps},
consider the bands
\[ G_n(\gamma) := \left\{
\begin{array}{l}
\left(\kappa_n(\gamma, \pi/L), \kappa_{n+1}(\gamma, \pi/L) \right),  \,\,\,
\hbox{for} \, \, \,  n \,\,\, \hbox{odd so that} \,\,\, \kappa_n(\gamma, \pi/L) \neq \kappa_{n+1}(\gamma,\pi/L), \\
\left( \kappa_n(\gamma, 0), \kappa_{n+1}(\gamma,0) \right),  \,\,\,
\hbox{for} \, \, \, n \,\, \, \hbox{even so that} \,\,\, \kappa_n(\gamma,0) \neq \kappa_{n+1}(\gamma,0), \\
\emptyset,  \,\,\,  \hbox{otherwise}. 
\end{array}
\right.
\]
and 
note that
$|G_n(\gamma)| = \delta_n(\gamma), \forall n \in \mathbb N$, if we consider $\mu = \gamma^2$ and $W(s)=V(s)$.

We have
\begin{Corollary}\label{cornamtw}
For each $n_3 \in \mathbb N$, there exist $\gamma> 0$ small enough and $\varepsilon_{n_3+1}> 0$ so that,
for all $\varepsilon \in (0, \varepsilon_{n_3+1})$,
\begin{equation}\label{cornameqai}
\min_{\theta \in {\cal C}} E_{n_3+1} (\gamma,\varepsilon, \theta) 
-
\max_{\theta \in {\cal C}} E_{n_3} (\gamma,\varepsilon, \theta) =
|G_{n_3}(\gamma)| + O(\varepsilon),
\end{equation}
holds for each $n=1,2,\cdots, n_3$, where $| \cdot |$ is the Lebesgue measure.
\end{Corollary}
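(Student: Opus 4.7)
The plan is to mimic the derivation of Corollary \ref{corcorrected}, applied to the scaled operator $T_{\gamma,\varepsilon}^\theta$ obtained via (\ref{scale}); Corollary \ref{cornamtw} is essentially the analogue of Corollary \ref{corcorrected} in the rescaled setting.

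First, I would observe that the substitution (\ref{scale}) produces the effective one-dimensional potential $V_\gamma(s):=\gamma^2 V(s)=\gamma^2 C(S)(\tau+\alpha')^2(s)+\gamma^2 c-\gamma^2 k^2(s)/4$, which is precisely the potential of $T_\gamma^\theta$. The hypothesis $\|k^2/4\|_\infty < c$ is preserved under the scale (since $\|\gamma^2 k^2/4\|_\infty<\gamma^2 c$ holds automatically), so the resolvent-set condition and every quadratic-form estimate used throughout Sections \ref{floquetsection}--\ref{reductiondimsec} remain in force with the same structure. Consequently Theorem \ref{reductionofdimension} carries over verbatim with $T_\varepsilon^\theta$, $T^\theta$ replaced by $T_{\gamma,\varepsilon}^\theta$, $T_\gamma^\theta$, and a rerun of the proof of Corollary \ref{asymptoticbehaviour} yields
\[E_n(\gamma,\varepsilon,\theta) = \lambda_0/\varepsilon^2 + \kappa_n(\gamma,\theta) + O(\varepsilon)\]
uniformly in $\theta\in{\cal C}$, for $n=1,\ldots,n_3+1$, for all sufficiently small $\varepsilon$.

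Next, I would invoke the Floquet-spectral theory for the one-dimensional operator $T_\gamma^\theta$. Since $\gamma^2 V(s)$ is a real, $L$-periodic potential, properties (a) and (b) cited at the start of Section \ref{sectionbandgaps} (from Chapter XIII.16 of \cite{reed}) remain valid. In particular $\kappa_n(\gamma,\cdot)$ is even, continuous, and strictly monotone on $[0,\pi/L]$ with monotonicity alternating with the parity of $n$, so that $\max_\theta \kappa_n(\gamma,\theta)$ equals $\kappa_n(\gamma,\pi/L)$ for $n$ odd and $\kappa_n(\gamma,0)$ for $n$ even, and symmetrically for the minimum. Combining this with the asymptotic above, the gap $\min_\theta E_{n_3+1}(\gamma,\varepsilon,\theta)-\max_\theta E_{n_3}(\gamma,\varepsilon,\theta)$ collapses, modulo an $O(\varepsilon)$ error, to $\kappa_{n_3+1}(\gamma,\cdot)-\kappa_{n_3}(\gamma,\cdot)$ evaluated at the appropriate common endpoint of $[0,\pi/L]$ (either $0$ or $\pi/L$ according to the parity of $n_3$). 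By the very definition of $G_{n_3}(\gamma)$, this difference equals $|G_{n_3}(\gamma)|$, giving (\ref{cornameqai}).

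The main obstacle I expect to confront is verifying that the $O(\varepsilon)$ bound in the scaled version of Corollary \ref{asymptoticbehaviour} is uniform in $\theta\in{\cal C}$ \emph{and} locally uniform in the scale parameter $\gamma$. Tracing through the quadratic-form estimates of Section \ref{reductiondimsec}, each constant $K$ depends only on $L^\infty$-norms of $k$, $\tau+\alpha'$ and their derivatives (together with $c$), all of which, after the substitution (\ref{scale}), are merely multiplied by fixed powers of $\gamma$. Hence as long as $\gamma$ ranges in a compact subset of $(0,\gamma_0)$ for some admissible $\gamma_0$, the bounds remain uniform and the argument goes through. This bookkeeping is where the bulk of the careful work lies; the rest is a direct transcription of Section \ref{sectionbandgaps}.
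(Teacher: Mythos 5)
Your proposal is correct and follows essentially the same route as the paper, which states Corollary \ref{cornamtw} without explicit proof as the direct analogue of Corollary \ref{corcorrected}: the scale (\ref{scale}) turns the effective potential into $\gamma^2 V(s)$, Theorem \ref{reductionofdimension} and Corollary \ref{asymptoticbehaviour} carry over to $T_{\gamma,\varepsilon}^\theta$ and $T_\gamma^\theta$, and the band-edge properties from Chapter XIII.16 of \cite{reed} identify the resulting difference with $|G_{n_3}(\gamma)|$. Your additional bookkeeping on the $\gamma$-dependence of the constants is a reasonable (and welcome) precaution, but it does not change the argument.
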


\noindent
{\bf Proof of Theorem \ref{locationgapsgaps}:}
Recall that we  have denoted by
$\{\nu_n\}_{n=-\infty}^{n=+\infty}$ the Fourier coefficients of
$V(s)$. 
Since $V(s)$ is not constant,
there exists $n_2 \in \mathbb N$ so that $\nu_{n_2} \neq 0$.

By Theorem \ref{theoapp},
\[|G_{n_2}(\gamma)| = \frac{2}{\sqrt{L}} \gamma^2 |\nu_{n_2}| + O(\gamma^4),
\quad \gamma \to 0.\]

On the other hand, by Corollary \ref{cornamtw}, there exists $\varepsilon_{n_2+1} > 0$ so that,
for all $\varepsilon \in (0, \varepsilon_{n_2+1})$, (\ref{cornameqai}) holds true.
Then, by taking $C_{\gamma, n_2}:=|G_{n_2}(\gamma)| > 0$, theorem is proven.

\vspace{0.4cm}
\noindent
{\bf Acknowledgments}
\vspace{0.4cm}

\noindent
The authors would like to thank Dr. C\'esar R. de Oliveira and 
Dr. David Krej${\rm\check{c}}$\'i${\rm\check{r}}$ik 
for useful discussions.

%\vspace{0.4cm}

%\noindent
%Email: {\rm carlitosrm53@gmail.com}, Departamento de Matem\'{a}tica, UFSCar, S\~{a}o Carlos, SP, 13560-970, Brazil.

%\vspace{0.4cm}
%\noindent
%Email: {\rm alessandraverri@dm.ufscar.br},
%Departamento de Matem\'{a}tica, UFSCar, S\~{a}o Carlos, SP, 13560-970, Brazil.

\end{document}